\theoremstyle{plain}
\newtheorem{theorem}{Theorem}
\newtheorem{lemma}[theorem]{Lemma}
\newtheorem{corollary}[theorem]{Corollary}
\newtheorem{observation}[theorem]{Observation}
\theoremstyle{definition}
\newtheorem{definition}[theorem]{Definition}
\theoremstyle{remark}
\newcommand{\namedref}[2]{\hyperref[#2]{#1~\ref*{#2}}}
\newcommand{\s}{\mspace{1mu}}
\newcommand{\mybox}[1]{\mspace{2mu}{\setlength{\fboxsep}{1.5pt}\color{lightgray}\boxed{\color{black}\scriptstyle #1}}\mspace{2mu}}
\newcommand{\A}{\mathsf{A}}
\newcommand{\B}{\mathsf{B}}
\renewcommand{\C}{\mathsf{C}}
\renewcommand{\L}{\mathsf{L}}
\renewcommand{\a}{\mathsf{a}}
\renewcommand{\b}{\mathsf{b}}
\newcommand{\y}{\mathsf{y}}
\newcommand{\Z}{\mathsf{Z}}
\newcommand{\M}{\mathsf{M}}
\renewcommand{\P}{\mathsf{P}}
\renewcommand{\O}{\mathsf{O}}
\newcommand{\X}{\mathsf{X}}
\newcommand{\Y}{\mathsf{Y}}
\newcommand{\id}{\mathsf{Id}}
\newcommand{\xx}{\mathsf{X}}
\newcommand{\xo}{\mathsf{O}}
\newcommand{\xa}{\mathsf{A}}
\newcommand{\xp}{\mathsf{P}}
\newcommand{\zx}{\mathsf{M}}
\newcommand{\zo}{\mathsf{U}}
\newcommand{\za}{\mathsf{B}}
\newcommand{\zp}{\mathsf{Q}}
\DeclareMathOperator{\re}{\mathcal R}
\DeclareMathOperator{\rere}{\overline{\mathcal R}}
\newcommand{\nodeconst}{\ensuremath{\mathcal{N}}}
\newcommand{\edgeconst}{\ensuremath{\mathcal{E}}}
\newcommand{\set}[1]{\left\{ #1 \right\}}
\newcommand{\LOCAL}{\ensuremath{\mathsf{LOCAL}}\xspace}
\newcommand{\CONGEST}{$\mathsf{CONGEST}$\xspace}
\DeclareMathOperator{\poly}{poly}
\newenvironment{myabstract}
{\list{}{\listparindent 1.5em%
		\itemindent    \listparindent
		\leftmargin    1cm
		\rightmargin   1cm
		\parsep        0pt}%
	\item\relax}
{\endlist}
\newenvironment{mycover}
{\list{}{\listparindent 0pt
		\itemindent    \listparindent
		\leftmargin    1cm
		\rightmargin   1cm
		\parsep        0pt}%
	\raggedright
	\item\relax}
{\endlist}
\newcommand{\myemail}[1]{\,$\cdot$\, {\small #1}}
\newcommand{\myaff}[1]{\,$\cdot$\, {\small #1}\par\smallskip}
\begin{document}

	\begin{mycover}
		{\huge\bfseries\boldmath Improved Distributed Lower Bounds for MIS and Bounded (Out-)Degree Dominating Sets in Trees \par}
		\bigskip
		\bigskip
		\bigskip
		
		\textbf{Alkida Balliu}
		\myemail{alkida.balliu@cs.uni-freiburg.de}
		\myaff{University of Freiburg}
		
		\textbf{Sebastian Brandt}
		\myemail{brandts@ethz.ch}
		\myaff{ETH Zurich}
		
		\textbf{Fabian Kuhn}
		\myemail{kuhn@cs.uni-freiburg.de}
		\myaff{University of Freiburg}
		
		\textbf{Dennis Olivetti}
		\myemail{dennis.olivetti@cs.uni-freiburg.de}
		\myaff{University of Freiburg}
	\end{mycover}
	\bigskip

\begin{myabstract}
	
	Recently, Balliu, Brandt, and Olivetti [FOCS '20] showed the first $\omega(\log^* n)$ lower bound for the maximal independent set (MIS) problem in trees. In this work we prove lower bounds for a much more relaxed family of distributed symmetry breaking problems. As a by-product, we obtain improved lower bounds for the distributed MIS problem in trees.
	
	For a parameter $k$ and an orientation of the edges of a graph $G$, we say that a subset $S$ of the nodes of $G$ is a \emph{$k$-outdegree dominating set} if $S$ is a dominating set of $G$ and if in the induced subgraph $G[S]$, every node in $S$ has outdegree at most $k$. Note that for $k=0$, this definition coincides with the definition of an MIS. For a given $k$, we consider the problem of computing a $k$-outdegree dominating set. We show that, even in regular trees of degree at most $\Delta$, in the standard \LOCAL model, there exists a constant $\epsilon>0$ such that for $k\leq \Delta^\epsilon$, for the problem of computing a $k$-outdegree dominating set, any randomized algorithm requires at least $\Omega\Big(\min\set{\log\Delta,\sqrt{\log\log n}}\Big)$ rounds and any deterministic algorithm requires at least $\Omega\Big(\min\set{\log\Delta,\sqrt{\log n}}\Big)$ rounds.

  	The proof of our lower bounds is based on the recently highly successful round elimination technique. We provide a novel way to do simplifications for round elimination, which we expect to be of independent interest. Our new proof is considerably simpler than the lower bound proof in [FOCS '20]. In particular, our round elimination proof uses a family of problems that can be described by only a constant number of labels. The existence of such a proof for the MIS problem was believed impossible by the authors of [FOCS '20].
\end{myabstract}

\thispagestyle{empty}
\setcounter{page}{0}
\newpage
\newpage

\section{Introduction}
\label{sec:intro}

The general question of understanding the distributed complexity of basic symmetry breaking tasks is at the heart of the area of distributed graph algorithms. Formally, distributed symmetry breaking problems are modeled as locally checkable problems, and they are typically studied in the standard \LOCAL model~\cite{Linial1992,Peleg2000}. In the \LOCAL model, the nodes of a graph $G=(V,E)$ communicate in synchronous rounds by exchanging possibly arbitrarily large messages with their neighbors. When terminating, an algorithm has to assign a label to each node and/or edge of the graph such that in every local neighborhood, a condition given by the considered locally checkable problem is satisfied. A prototypical example of such a distributed symmetry breaking task is the problem of computing a maximal independent set (MIS) of a graph $G$. Here, each node has to either output $1$ (in the MIS) or $0$ (not in the MIS) such that no two neighbors output $1$ and such that every node that outputs $0$ has at least one neighbor that outputs $1$. The distributed complexity of computing an MIS has been studied intensively for more than 30 years and understanding the complexity of computing an MIS and of closely related problems is at the core of understanding the complexity of symmetry breaking problems more generally (see., e.g.,~\cite{Barenboim2013,stoc17complexity,Linial1992,Luby1985}). 

The objectives of this paper are twofold. On the one hand, we want to improve our understanding of the distributed complexity of computing an MIS and of related graph structures. On the other hand, we also want to more generally improve our understanding of lower bound techniques for the \LOCAL model and in particular, we make further developments on the recently highly successful round elimination technique. Specifically, in this paper, we study the problem of computing an MIS and of computing bounded degree and bounded outdegree dominating sets in tree networks. Bounded degree and outdegree dominating sets are natural relaxations of an MIS, which we define next.

\paragraph{Bounded (Out-)Degree Dominating Sets.} An MIS of a graph $G=(V,E)$ is a node set $S$ such that $S$ is a \emph{dominating set} (every node node in $S$ has a neighbor in $S$) and an \emph{independent set} (no two neighbors are in $S$) of $G$. Two particularly natural ways to relax the requirement of an MIS are therefore to either relax the domination condition or to relax the independence condition. We can relax the domination condition by requiring that each node $v\not\in S$ only has a node in $S$ within some fixed distance $r\geq 1$. We then obtain what is known as a $(2,r)$-ruling set~\cite{Awerbuch89}. The distributed complexity of computing such ruling sets has been studied quite extensively~\cite{Awerbuch89,balliurules,Barenboim2012,Gfeller07,ghaffari16improved,KuhnMW18}. In this paper, we relax the MIS requirement differently. We keep the domination condition and we instead relax the independence condition.  For a graph $G=(V,E)$ and an integer parameter $k\geq 0$, a node set $S\subseteq V$ is called a \emph{$k$-degree dominating set} if $S$ is a dominating set of $G$ and if the induced subgraph $G[S]$ has maximum degree at most $k$. Further, if we are also given an orientation of the edges of the induced subgraph $G[S]$, then $S$ together with this edge orientation is called a \emph{$k$-outdegree dominating set} if $S$ is a dominating set and every node $v\in S$ has outdegree at most $k$ in $G[S]$. Note that for $k=0$, both definitions coincide with the definition of an MIS and for $k>0$, $k$-degree and $k$-outdegree dominating sets are thus a natural relaxation of an MIS. While we are not aware of any work on distributed algorithms for computing bounded degree or outdegree dominating sets, there is previous work on closely related problems on line graphs. An MIS of the line graph of a graph $G$ is a maximal matching of $G$. A natural generalization of matchings on graphs are so-called $b$-matchings, where in its simplest form, a $b$-matching is a set of edges of a graph such that no node is contained in more than $b$ edges. The distributed computation of (variations of) $b$-matchings has for example been studied in \cite{Balliu2019,trulytight,fischer17improved,Koufogiannakis09}.



\subsection{Our Results}
\label{sec:results}

As our main technical contribution, for any $k\leq \Delta^\epsilon$ for some constant $\epsilon>0$, we prove an $\Omega(\log\Delta)$-round lower bound for computing $k$-outdegree dominating sets in $\Delta$-regular trees in the port numbering model.\footnote{In the port numbering model, the nodes of the communication graph do not have unique IDs, but they can distinguish their neighbors by distinct port numbers.} The lower bound for the port numbering model is then lifted to the more powerful general \LOCAL model by using standard techniques developed in \cite{balliurules,binaryLCL,trulytight,Brandt2016,Brandt2019,chang16exponential}, leading to the following main result.
\begin{theorem}\label{thm:main}
	For $k \le \Delta^\epsilon$ and some constant $\epsilon > 0$, the $k$-outdegree dominating set problem requires time $\Omega(\min\{\log \Delta, \log_\Delta n\})$ in the deterministic \LOCAL model and time $\Omega(\min\{\log \Delta, \log_\Delta \log n\})$ in the randomized \LOCAL model, in $\Delta$-regular trees of $n$ nodes.
\end{theorem}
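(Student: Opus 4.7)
The plan is to establish the $\Omega(\log\Delta)$ lower bound in the port-numbering model via round elimination and then lift it to the full \LOCAL model using the standard techniques from~\cite{balliurules,Brandt2019,chang16exponential}. Concretely, I would first encode the $k$-outdegree dominating set problem as a locally checkable labeling $\Pi_0$ on $\Delta$-regular trees, using a small fixed set of half-edge labels that record, for each endpoint of each edge, whether the incident node is in the dominating set and, if so, how the edge contributes to its outdegree. I would then iterate the round elimination operator $\re$ to obtain a sequence $\Pi_0,\Pi_1,\Pi_2,\ldots$ with the property that if $\Pi_i$ is solvable in $T$ rounds on $\Delta$-regular trees in the port-numbering model, then $\Pi_{i+1}$ is solvable in $T-1$ rounds.

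The main technical content is to exhibit an explicit family of problems, closed under $\re$ up to a carefully chosen simplification step, whose members remain non-trivial (unsolvable in $0$ rounds) for $\Omega(\log\Delta)$ iterations provided $k\leq\Delta^\epsilon$. A direct application of $\re$ blows up the label set exponentially; my plan is to compose $\re$ at each step with a simplification that (i) replaces the $\re$-problem by a no-easier problem still lying in the family (so that a lower bound for the simplified problem transfers back), while (ii) keeping the total number of labels bounded by an absolute constant, and (iii) shrinking an internal hardness parameter by only a bounded factor per step. Designing this simplification is the hard part of the argument: it must preserve hardness, keep the description size constant, and be slow enough to survive $\Omega(\log\Delta)$ applications. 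This is precisely the novel simplification strategy advertised in the abstract, and it is what allows every level of the round elimination chain to be described by only a constant number of labels, contrary to what was previously believed possible for MIS-like problems. A quantitative bookkeeping argument, exploiting the slack provided by the hypothesis $k\leq\Delta^\epsilon$, shows that one loses only an absolute constant per step, giving an $\Omega(\log\Delta)$ port-numbering lower bound for $\Pi_0$ and hence for $k$-outdegree dominating set.

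To finish the proof of \theoremref{thm:main}, I would lift the port-numbering bound to the deterministic and randomized \LOCAL models via the now-standard machinery of~\cite{balliurules,chang16exponential,Brandt2019,Brandt2016}. Since the hard instances are $\Delta$-regular trees of depth $O(\log\Delta)$ and hence of $n=\Delta^{O(\log\Delta)}$ nodes, any \LOCAL algorithm running in $o(\log_\Delta n)$ deterministic rounds would, by the usual ID-removal reduction, violate the port-numbering lower bound; an analogous reduction from randomization to determinism at the cost of an $\exp$ in the dependence on $n$ yields the randomized bound with $\log_\Delta\log n$. Combining the two regimes gives the claimed $\Omega(\min\{\log\Delta,\log_\Delta n\})$ and $\Omega(\min\{\log\Delta,\log_\Delta\log n\})$ round bounds, completing the theorem.
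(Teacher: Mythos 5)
Your proposal correctly reconstructs the scaffolding of the paper's argument (encode the problem as a constant-label LCL, iterate $\rere\circ\re$ in the port-numbering model, control a quantitative hardness parameter so it decays only geometrically, then apply the standard lifting machinery to reach the stated \LOCAL bounds), but it explicitly punts on the one step that carries all the technical content. You write that ``Designing this simplification is the hard part of the argument'' and identify it as ``the novel simplification strategy advertised in the abstract,'' yet never say what the simplification \emph{is}. As written, the proposal is an outline with a named black box, not a proof.

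The idea you are missing is that the lower bound is proved assuming nodes are given a $\Delta$-edge coloring as input (this only strengthens the lower bound and is compatible with round elimination because edge colorings are suitably independent). Concretely, the paper defines a family $\Pi_\Delta(a,x)$ over the five labels $\{\M,\P,\O,\A,\X\}$; applying $\rere\circ\re$ and relaxing produces a problem $\Pi^+_\Delta(a,x)$ with one extra label $\C$ and one extra allowed node configuration $\C^{\Delta-x}\s\X^{x}$, which cannot be eliminated by relaxation alone without trivializing the problem. The edge coloring provides a coordination-free way to convert a $\Pi^+$-solution into a $\Pi_\Delta(\lfloor(a-2x-1)/2\rfloor,x+1)$-solution in zero rounds: $\C$-nodes rewrite $\C$ to $\A$ only on edges whose color lies in $\{1,\ldots,\lfloor(a-1)/2\rfloor\}$, while original $\A$-nodes simultaneously demote $\A$ to $\X$ on exactly those colors, so no $\A\s\A$ edge can arise; both types then trim to exactly $\lfloor(a-2x-1)/2\rfloor$ copies of $\A$. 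This is what makes $a$ drop by roughly a constant factor per step while $x$ increments, which is exactly the ``shrinking by a bounded factor'' you postulate but do not realize. Without some such external input, the simplification fails and your plan stalls at the step where the label set would otherwise have to grow.

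One further caution on direction: you say the simplification should replace the $\re$-problem by a ``no-easier'' problem so that the lower bound transfers. It is the opposite: for the chain argument you need $\Pi_{i+1}$ to be \emph{no harder} than $\rere(\re(\Pi_i))$ (i.e., solvable in zero rounds given a solution to the latter), so that non-zero-round-solvability of $\Pi_t$ propagates backward to give $\Pi_0$ a lower bound of $t+1$ rounds. The opposite worry---that the simplification must not make the problem \emph{too} easy, so that the terminal problem remains non-trivial---is also real, and the paper controls it via the explicit parameter bookkeeping in its Lemma on the problem sequence, but that is a separate requirement from the one you stated.
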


By choosing the right value of $\Delta$, we obtain the following corollary.
\begin{corollary}\label{cor:main}
  	For some constant $\epsilon > 0$, the $k$-outdegree dominating set problem requires  at least $\Omega\Bigl(\min\set{\log \Delta, \sqrt{\log n}}\Bigr)$ rounds in the deterministic \LOCAL model for $k \le \min\{\Delta^\epsilon,2^{\epsilon \sqrt{\log n}}\}$, and $\Omega\Bigl(\min\set{\log \Delta, \sqrt{\log\log n}}\Bigr)$ rounds in the randomized \LOCAL model for $k \le \min\{\Delta^\epsilon,2^{\epsilon \sqrt{\log \log n}}\}$, in $n$-node trees of maximum degree at most $\Delta$.
\end{corollary}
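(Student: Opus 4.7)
The plan is to derive the corollary from \theoremref{thm:main} by choosing, for each pair $(n,\Delta)$, an appropriate effective degree $\Delta'\le\Delta$ to instantiate the theorem with. The key observation is that any $\Delta'$-regular tree with $\Delta'\le\Delta$ is itself a valid $n$-node tree of maximum degree at most $\Delta$, so any lower bound obtained from \theoremref{thm:main} with parameter $\Delta'$ transfers to the setting of the corollary.

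Focus first on the deterministic bound. The function $\min\{\log\Delta',\log_{\Delta'} n\}$ appearing in \theoremref{thm:main} is maximized when its two arguments coincide, that is, when $(\log\Delta')^2 = \log n$, giving threshold $\Delta^\star := 2^{\sqrt{\log n}}$ with a common value of $\sqrt{\log n}$. I would then split into two cases. If $\Delta\le\Delta^\star$, instantiate \theoremref{thm:main} with $\Delta'=\Delta$; here $\log_\Delta n\ge\sqrt{\log n}\ge\log\Delta$, so the theorem yields $\Omega(\log\Delta)=\Omega(\min\{\log\Delta,\sqrt{\log n}\})$, and the required constraint is just $k\le\Delta^\epsilon$. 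If instead $\Delta>\Delta^\star$, instantiate the theorem with $\Delta'=\lceil\Delta^\star\rceil\le\Delta$; then both terms in the $\min$ are $\Theta(\sqrt{\log n})$, yielding the bound $\Omega(\sqrt{\log n})=\Omega(\min\{\log\Delta,\sqrt{\log n}\})$, and the $k\le(\Delta')^\epsilon$ assumption of \theoremref{thm:main} becomes $k\le 2^{\epsilon\sqrt{\log n}}$ (after absorbing the ceiling into $\epsilon$). Combining the two cases, the argument goes through provided $k\le\min\{\Delta^\epsilon,2^{\epsilon\sqrt{\log n}}\}$, as stated.

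The randomized bound is handled by the identical two-case split, with $\log n$ replaced by $\log\log n$ throughout: balance point $\Delta^\star = 2^{\sqrt{\log\log n}}$, threshold bound $\sqrt{\log\log n}$, and $k$-constraint $k\le\min\{\Delta^\epsilon,2^{\epsilon\sqrt{\log\log n}}\}$. There is no substantive obstacle here; the only thing to check is that the ceiling in $\Delta'=\lceil 2^{\sqrt{\log n}}\rceil$ (resp.\ $\lceil 2^{\sqrt{\log\log n}}\rceil$) does not distort the bound or the $k$-constraint by more than a constant factor in the exponent, which can be absorbed by shrinking $\epsilon$ slightly. Thus the corollary is essentially a bookkeeping consequence of the theorem, and the proof is only a few lines.
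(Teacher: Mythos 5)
Your proof is correct and follows essentially the same approach as the paper's one-line derivation, which simply says to set $\Delta\approx 2^{\sqrt{\log n}}$ (resp.\ $\Delta\approx 2^{\sqrt{\log\log n}}$) in \theoremref{thm:main}; your case split on whether $\Delta$ is above or below the balance point $\Delta^\star$ and the reduction via $\Delta'$-regular subinstances just makes the bookkeeping explicit.
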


Note that the same lower bound of course also holds for the $k$-degree dominating set problem as a $k$-degree dominating set can be transformed into a $k$-outdegree dominating set by orienting the edges in an arbitrary way. For the problem of computing an MIS, the lower bound of Theorem \ref{thm:main} improves and in our view also significantly simplifies a lower bound from \cite{balliurules}, where it was shown that computing an MIS in $n$-node trees of maximum degree at most $\Delta$ requires at least $\Omega\Bigl(\min\set{\frac{\log \Delta}{\log\log\Delta}, \sqrt{\frac{\log n}{\log\log n}}}\Bigr)$ in the deterministic \LOCAL model and time $\Omega\Bigl(\min\set{\frac{\log\log \Delta}{\log\log\log\Delta}, \sqrt{\frac{\log\log n}{\log\log\log n}}}\Bigr)$ in the randomized \LOCAL model. While the lower bound proof of \cite{balliurules} also provides some lower bounds for computing ruling sets, our new lower bound proof generalizes to computing bounded degree and outdegree dominating sets. We note that for general graphs, there are stronger lower bounds known. In particular, in \cite{Balliu2019,trulytight}, it was shown that for any $b\in[1,\Delta^{1-\epsilon}]$ for a constant $\epsilon>0$, computing a $b$-matching in $\Delta$-regular trees requires time $\Omega\Bigl(\min\set{\frac{\Delta}{b},\frac{\log n}{\log\log n}}\Bigr)$ in the deterministic \LOCAL model and time $\Omega\Bigl(\min\set{\frac{\Delta}{b},\frac{\log\log n}{\log\log\log n}}\Bigr)$ in the randomized \LOCAL model. This immediately implies the same lower bounds for MIS (by setting $b=1$) and for $b$-outdegree and $b$-degree dominating sets in general (regular) graphs. We however believe that it is important to also understand the complexity of MIS and related problems in trees. First, trees are a particularly simple and important family of graphs and at least as a function of the maximum degree $\Delta$, we are not aware of any algorithmic techniques for computing an MIS that work better in trees than in general graphs. Further, MIS lower bounds that hold in trees might also lead to lower bounds on other interesting problems. For example, while computing ruling sets might be as hard in trees as it is in general graphs, the same is provably not true for line graphs, where $(2,r)$-ruling sets for $r\geq 2$ can be computed in $O(\log^* n)$ rounds~\cite{balliurules,KuhnMW18}.

\paragraph{Distributed Algorithms for Bounded (Out-)Degree Dominating Sets.} Before we discuss the technical details of our approach in more detail, we want to briefly discuss what we can say about upper bounds on the complexity of computing $k$-(out-)degree dominating sets. We here concentrate on running times as a function of the maximum degree $\Delta$ of the $n$-node input graph $G$. For $k=0$, both problems ask for computing an MIS, which can be done deterministically in time $O(\Delta+\log^* n)$ by using an algorithm from \cite{barenboim14distributed}. The lower bounds of \cite{Balliu2019} imply that at least for general graphs, this is best possible even for randomized algorithms, unless we allow a significantly larger dependency of the time complexity on the number of nodes $n$. For $k\geq 1$, the fastest algorithms that we are aware of are based on the following simple idea. As a first step, we compute a defective or an arbdefective coloring of the graph. A $k$-defective $c$-coloring of $G$ partitions the nodes of $G$ into $c$ color classes such that the maximum degree of the induced subgraph of each color class is at most $k$. A $k$-arbdefective $c$-coloring is a partition of the nodes into $c$ color classes together with an orientation of the edges such that the maximum outdegree of the induced subgraph of each color class is at most $k$. From a $k$-defective or $k$-arbdefective $c$ coloring, we can get to $k$-degree or $k$-outdegree dominating set as follows. We start with an empty set $S$ and iterate over the $c$ color classes. When considering the nodes of a given color class, we add all nodes to the set $S$ that do not already have a neighbor in $S$. It is known that a $k$-arbdefective coloring with $O(\Delta/k)$ colors can be computed in time $O(\Delta/k + \log^* n)$~\cite{barenboim18} and that a $k$-defective coloring with $O((\Delta/k)^2)$ colors can be computed in $O(\log^* n)$~\cite{Kuhn09}. This implies that we can always compute a $k$-outdegree dominating set in time $O(\Delta/k + \log^*n)$ and that we can compute a $k$-degree dominating set in time $O\big(\min\set{\Delta,(\Delta/k)^2}+\log^* n\big)$. Note that the running time for $k$-outdegree dominating sets matches the lower bound of \cite{trulytight} for general graphs. In fact, the lower bound in \cite{trulytight} holds for line graphs. In line graphs, any $k$-outdegree dominating set $S$ is directly also an $O(k)$-degree dominating set. This follows because if some edge $\set{u,v}$ in $S$ has $k$ outneighbors in an oriented version of the line graph, then either $u$ or $v$ must be contained in $\Omega(k)$ edges of $S$. For line graphs, we therefore know that the lower bound of \cite{trulytight} is asymptotically tight both for $k$-outdegree and for $k$-degree dominating sets (unless we allow a larger dependency of the time on $n$).

\subsection{Our Approach and Techniques}
In order to prove our results we use the automatic version of the round elimination technique~\cite{Brandt2019,Olivetti2019}. On a high level, the round elimination technique works as follows. Suppose we want to prove a lower bound for a problem $\Pi_0$ of interest. The idea is to find a \emph{lower bound sequence} of problems $\Pi_0\rightarrow\Pi_1\rightarrow\Pi_2\rightarrow\ldots$ such that each problem $\Pi_i$ is solvable at least one round faster than problem $\Pi_{i-1}$ (as long as $\Pi_{i-1}$ is not $0$-rounds solvable). If we can find such a sequence and in addition we can show that problem $\Pi_{T-1}$ is not $0$-round solvable, then we have proven a lower bound of $T$ rounds for our problem of interest $\Pi_0$. The automatic round elimination technique is able to theoretically provide a lower bound sequence of problems in an automatic way. Unfortunately it often happens that, if we start from a problem of interest, such as MIS, and apply this technique as is, the size of the description of the obtained problems grows roughly doubly exponentially in each step of the problem sequence. Previous works that have used the automatic round elimination technique usually tried to keep the description of the problems small enough by carefully \emph{simplifying} each problem in the sequence: a simplification must be such that, on the one hand it reduces the size of the problem description, and on the other hand it must not make the problem too easy to solve, which would result in a trivial lower bound. The round elimination technique can be used for finding upper bounds as well. For this purpose, the idea is to find an \emph{upper bound sequence} of problems $\Pi_0\rightarrow\Pi_1\rightarrow\Pi_2\rightarrow\ldots$ such that each problem $\Pi_i$ is solvable at most one round faster than problem $\Pi_{i-1}$. Then, if we can show that problem $\Pi_T$ is $0$-round solvable, we get an upper bound of $T$ rounds for our problem of interest $\Pi_0$. Generally, there have been mainly $4$ different approaches for applying round elimination for proving lower bounds, based on the number of labels in the problem sequence.
\begin{itemize}
	\item \emph{Doubly exponential growth}: A possible approach is to embrace the doubly exponential growth of each problem in the lower bound sequence and let the labels grow without trying to understand the structure of the problems. Usually, in this way one can show only $\Omega(\log^* n)$ or $\Omega(\log^* \Delta)$ lower bounds (see e.g., \cite{Linial1992, Brandt2019}).
	
	\item \emph{Similarity}: Another approach consists in making careful simplifications in order to try to make each problem as similar as possible to the previous problem of the sequence. This usually allows us to obtain a family where each problem has the same number of labels, that is, the number of labels stays constant, making it easier to apply the round elimination technique. Unfortunately, sometimes this approach seems not to work, and by making this kind of simplifications we obtain problems that are too easy. This approach has been used for example in~\cite{Balliu2019}.
	
	\item \emph{Fixed points}: A case limit of the similarity approach is the one in which we try to build a lower bound sequence where there is a non-$0$-round solvable problem that appears more than once in the sequence (i.e., a fixed point). This directly implies an $\Omega(\log n)$ deterministic and $\Omega(\log\log n)$ randomized lower bound (see e.g., \cite{binaryLCL}).
	
	\item \emph{Linear growth}: Sometimes, the similarity approach does not seem to work, and if the number of labels grows exponentially we do not get a sufficiently strong lower bound. We can then try to let the number of labels grow slower than exponentially, e.g., linearly. Unfortunately, in this case it seems hard to guess the right problem sequence such that we can then indeed prove that it is a lower bound sequence. In~\cite{balliurules}, the following interesting approach was used. First, the authors find an upper bound sequence for the problem of interest, and they then turn this sequence into a lower bound one by performing some suitable simplifications.
\end{itemize}
The MIS problem, and more generally the $k$-outdegree dominating set problems, fall into the category of those problems that grow roughly doubly exponentially at each step of round elimination. Hence, in order to find a polylogarithmic lower bound for such problems we must find a way to reduce the size of the description of each problem in the lower bound sequence. Ideally, we would like to find a lower bound sequence such that we can describe each problem with a constant the number of labels, since this would make the approach considerably simpler and cleaner. Regarding if such a lower bound sequence exists or not for the MIS problem, Balliu, Brandt, and Olivetti~\cite{balliurules} stated the following.

\begin{framed}
	\noindent While we do not have a proof, we do not believe that for MIS such a sequence yielding a polylogarithmic lower bound exists.
\end{framed}

We disprove this statement and we show polylogarithmic lower bounds for MIS and more generally for $k$-outdegree dominating sets, by finding a lower bound sequence where each problem is described using only a constant number of labels. The key idea of our novel approach relies on assuming to have a $\Delta$-edge coloring as input, and then we exploit this input for performing some simplifications that could not be done without assuming such an input (note that this only strengthens our lower bounds, which hold even in the case where we have such an input).

\paragraph{Approach.} Let us discuss the high level idea of our approach in more detail. For simplicity, we focus on the case of the MIS problem (the lower bound for the more general $k$-outdegree dominating sets uses a very similar approach). We start from a problem $\Pi_0$ that is a relaxation of the problem of computing an MIS and we generate a problem sequence $\Pi_0,\Pi_1,\Pi_2,\dots$ such that informally, each of the problems $\Pi_i$ is a mix between an MIS and a problem that requires nodes to output a good orientation. In particular, in each problem $\Pi_i$ we have $3$ types of nodes, which, slightly simplified, take over the following roles. 
\begin{itemize}
	\item \emph{IS-nodes}: nodes in the independent set ($S$).
	\item \emph{Orientation-nodes}: nodes not in $S$ that are required to have a certain number of outgoing-oriented incident edges.
	\item \emph{Non-IS-nodes}: nodes not in $S$ that must have a neighbor that is either an IS-node or an orientation-node.
\end{itemize}
Let $\Pi_1$ be the problem that we get by performing one step of the round elimination technique on $\Pi_0$. What we would like to do is to prove that $\Pi_1$ is such that we can use its solution to solve in $0$ rounds some problem that looks very similar to $\Pi_0$, that is, we would like to use the \emph{similarity} approach described before. Unfortunately, this seems to be not possible, since $\Pi_1$ seems to contain some additional allowed configurations that cannot be relaxed to the original ones without making the obtained problem too easy. Here lies the novelty of our approach: we show that, by exploiting some input given to the nodes, then it becomes possible to achieve our goal. In particular, we prove that $\Pi_1$ is such that we can use its solution, plus a $\Delta$-edge coloring given in input to the nodes, to solve in $0$ rounds a variant $\Pi'_0$ of $\Pi_0$ where the number of required outgoing edges of orientation-nodes in $\Pi'_0$ is less than that of orientation-nodes in $\Pi_0$, but not by much. We then iterate the above reasoning and show that, at each step, the number of required outgoing edges of orientation-nodes goes down by at most a constant factor. This means that we can repeat this process $\Omega(\log \Delta)$ times before getting a problem that is too easy to solve. In other words, with the above process we build a lower bound sequence of length $\Omega(\log \Delta)$ such that the last problem is not $0$-round solvable, hence proving our lower bound.

\subsection{Additional Related Work}

\paragraph{Round Elimination Technique.}
 Linial's $\Omega(\log^* n)$ lower bound for $3$-coloring a cycle was the first lower bound that was proven by what can be understood as a version of round elimination~\cite{Linial1992}. In its current form, round elimination was introduced in a seminal paper by Brandt et al.~\cite{Brandt2016}, which proves a randomized $\Omega(\log\log n)$-round lower bound on the problems of computing a sinkless orientation or a $\Delta$-coloring and as a corollary more generally for the distributed constructive Lova\'{a}sz Local Lemma problem. The lower bound of \cite{Brandt2016} was later lifted to an $\Omega(\log n)$ deterministic lower bound in \cite{chang16exponential} and it was generalized to the $(2\Delta-2)$-edge coloring problem in \cite{chang18complexity}. Later, the round elimination technique was used to show an $\Omega(\log^*n)$ lower bound for the weak $2$-coloring problem~\cite{BalliuHOS19}. In 2019, Brandt~\cite{Brandt2019} refined the technique presenting an \emph{automatic} way to perform round elimination. Based on this result, Olivetti~\cite{Olivetti2019} implemented a tool, called round eliminator, that performs round elimination automatically. These two works were fundamental for a better understanding of the round elimination technique. In fact, Balliu et al.~\cite{Balliu2019} managed to apply the automatic round elimination technique to prove intriguing novel lower bounds for maximal matchings and for MIS. These results were afterwards improved and generalized by Brandt and Olivetti~\cite{trulytight}. Then, round elimination was used for giving a complete characterization of locally checkable problems that can be encoded in the edge-formalism of the round elimination framework by using at most two labels~\cite{binaryLCL}. Fraigniaud and Paz showed how to use round elimination with the algebraic topology framework~\cite{FraigniaudPaz20}. Recently, Balliu, Brandt, and Olivetti~\cite{balliurules} used round elimination to show lower bounds for ruling sets on trees. They also showed that round elimination can be used to obtain upper bounds for ruling sets that asymptotically match the existing best known upper bounds.

\paragraph{Distributed MIS Algorithms.} 
The MIS problem is one of the most extensively studied symmetry breaking problems in the distributed setting (e.g., \cite{Awerbuch89, KarpW85, Luby1985, Alon1986, Linial1992, Naor1991, panconesi96decomposition, Kuhn2004, BarenboimE10, SchneiderW10, LenzenW11, Barenboim2016, Barenboim2013, barenboim14distributed, ghaffari16improved, Rozhon2020,GGR2020}). The first works date back to the late 1980s, where Luby \cite{Luby1985} and Alon, Babai, and Itai \cite{Alon1986} devised parallel $O(\log n)$-round randomized algorithms for solving MIS on graphs with $n$ nodes. Those algorithm directly also work in the distributed setting. The first deterministic distributed MIS algorithm was also developed in the late 1980s. In \cite{Awerbuch89}, Awerbuch et al.\ introduced a generic tool known as network decomposition that allows to deterministically compute an MIS in time $2^{O(\sqrt{\log n\log\log n})}$ in the \LOCAL model. This was later improved to a time of $2^{O(\sqrt{\log n})}$  by Panconesi and Srinivasan~\cite{panconesi96decomposition}. The first improvement on the randomized MIS complexity was obtained by Barenboim et al.~\cite{Barenboim2016}, who showed that an MIS can always be computed in time $O(\log^2\Delta) + 2^{O(\sqrt{\log\log n})}$. The algorithm is faster than the simple $O(\log n)$-time algorithms if the maximum degree $\Delta$ is moderately small. The paper introduced what is now known as the shattering technique to the area of distributed algorithms. First, the problem is solved on most of the graph by using some randomized method, and afterwards the remaining small unsolved components are solved by using the fastest known deterministic algorithm. The result of \cite{Barenboim2016} was improved by Ghaffari~\cite{ghaffari16improved}, who showed that an MIS can be computed in time $O(\log\Delta)+2^{O(\sqrt{\log\log n})}$ in the randomized \LOCAL model. Further improvements were obtained through a recent breakthrough by Rozhon and Ghaffari~\cite{Rozhon2020}, who showed that a network decomposition as introduced in \cite{Awerbuch89} can actually be computed in $\poly\log n$ time deterministically in the \LOCAL model. A slight improvement of the algorithm of \cite{Rozhon2020} leads to the current best deterministic complexity of $O(\log^5 n)$ and the currently best randomized complexity of $O(\log\Delta + \log^5\log n)$ for computing an MIS in the \LOCAL model~\cite{GGR2020}. In a separate line of work, it was shown by Barenboim, Elkin, and Kuhn in \cite{barenboim14distributed} that MIS can be computed in time $O(\Delta+\log^* n)$. Note that this is faster than the previously mentioned algorithms if the maximum degree $\Delta$ is sufficiently small. In addition, the problem of computing an MIS has also been studied in the more restrictive \CONGEST model, where in each round, every node can only send an $O(\log n)$-bit message to each neighbor~\cite{DISC17_MIS,Ghaffari19congest,ghaffariPortman19,GGR2020,Rozhon2020}. Together, these papers imply that also in the \CONGEST model, an MIS can be computed deterministically in time $O(\log^5 n)$ and the fastest known randomized algorithm has a time complexity of $O(\log\Delta\cdot\log\log n + \log^6\log n)$ and is therefore almost as fast as the fastest known randomized \LOCAL model algorithm. 

To conclude our discussion of distributed MIS algorithms, we finally also want to discuss previous work on solving the MIS problem on trees. The first such paper is by Barenboim and Elkin~\cite{BarenboimE10}. They show that in graphs of bounded arboricity and therefore in particular in trees, an MIS can be computed deterministically in only $O(\log n / \log\log n)$ rounds. This still is the fastest known deterministic algorithm for computing an MIS in an $n$-node tree. The first paper to explicitly look at randomized algorithms for computing MIS in trees is by Lenzen and Wattenhofer~\cite{LenzenW11}. They show that in trees, an MIS can be computed in $O(\sqrt{\log n}\log\log n)$ randomized rounds. This result was later improved by Barenboim et al.~\cite{Barenboim2016} and by Ghaffari~\cite{ghaffari16improved} to $O(\sqrt{\log n\log\log n})$ rounds and to $O(\sqrt{\log n})$ rounds, respectively. Those time complexities can be improved if the maximum degree $\Delta$ is moderately small. In this case, the fastest known randomized MIS algorithm on trees has a round complexity of $O(\log\Delta + \log\log n/\log\log\log n)$~\cite{ghaffari16improved}.

\paragraph{Distributed MIS Lower Bounds.} 
 On the lower bound side, it has been known since the late 1980s and early 1990s that computing an MIS requires $\Omega(\log^* n)$ rounds~\cite{Linial1992, Naor1991}, even on graphs with maximum degree $2$ and even for randomized algorithms. Much later, Kuhn, Moscibroda, and Wattenhofer~\cite{Kuhn2004} proved that computing an MIS requires $\Omega\Bigl(\min\Bigl\{\frac{\log\Delta}{\log\log\Delta},\sqrt{\frac{\log n}{\log\log n}}\Bigr\}\Bigr)$ rounds even for randomized algorithms. Balliu et al.~\cite{Balliu2019} recently improved and complemented this result by showing that computing an MIS requires $\Omega\Bigl(\min\Bigl\{\Delta, \frac{\log n}{\log \log n}\Bigr\}\Bigr)$ and $\Omega\Bigl(\min\Bigl\{\Delta, \frac{\log \log n}{\log \log \log n}\Bigr\}\Bigr)$ rounds for deterministic and randomized algorithms, respectively. Except for the $\Omega(\log^* n)$ lower bound, the above MIS lower bounds only hold for general graphs, and they do not apply to sparse graph classes or even to trees. Last year, Balliu, Brandt, and Olivetti~\cite{balliurules} improved the classic $\Omega(\log^*n)$ lower bound also for this case and showed that computing an MIS on regular trees of degree at most $\Delta$ requires $\Omega\left(\min \left\{  \frac{\log \Delta}{\log \log \Delta}  ,  \sqrt{\frac{\log n}{ \log \log n}} \right\} \right)$ and $\Omega\left(\min \left\{  \frac{\log \Delta}{ \log \log \Delta}  , \sqrt{\frac{\log \log n}{ \log \log \log n}} \right\} \right)$ rounds for deterministic and randomized algorithms, respectively.

\section{Preliminaries}

\subsection{The \texorpdfstring{\boldmath \LOCAL}{LOCAL} Model}
The lower bounds that we present in this paper hold for the \LOCAL model of distributed computing. In this model, nodes of a graph $G=(V,E)$, where $|V|=n$, have a globally unique identifier from $\{1,2,\dotsc, \poly(n)\}$. Initially, each node $v$ knows its own identifier $\id(v)$, its own degree $\deg(v)$, the maximum degree $\Delta$ of the graph, and the total number $n$ of nodes. The computation proceeds in synchronous rounds. In each round, each node sends messages to its neighbors, receives messages from its neighbors, and performs local computation. Both the size of the messages and the local computation are not bounded, i.e., messages can be of arbitrary size and the local computation can be of arbitrary complexity. Each node executes the same (distributed) algorithm, and, at some point, each node must decide to terminate its computation. Upon termination, each node must decide its own local output. If the local outputs together form a feasible global output, then we say that the problem is solved correctly. The time complexity of a distributed algorithm is measured as the number of rounds required until all nodes terminate. Due to the unbounded size of the messages, a distributed algorithm of time complexity $T$ for the \LOCAL model can be seen as a function that maps $T$-radius neighborhoods into outputs. 

In the randomized version of the \LOCAL model, in addition to the above, nodes have access to a stream of private random bits. We consider Monte Carlo randomized algorithms, that is, the algorithm must provide a global solution that is correct with high probability, that is, with probability at least $1-1/n$.

The \LOCAL model is quite a strong model, hence lower bounds for the \LOCAL model directly apply in other weaker models as well. In fact, another well known model of distributed computation is the \CONGEST model, where the only difference between the two models is regarding the size of the messages: while in the \LOCAL model we do not bound the size of the messages, in the \CONGEST model, the size of the messages is bounded by $O(\log n)$ bits. Hence, the lower bounds that we present in this paper also hold for the \CONGEST model.

\paragraph{The Port Numbering Model.}
While our model of interest is the \LOCAL model of distributed computation, for technical reasons, we first show our results in a weaker model, that is the \emph{port numbering} (PN) model, and then we then lift them to the \LOCAL model. The only difference between the PN model and the \LOCAL model is that, in the former, nodes do not have an identifier, instead they have a \emph{port numbering} in input. More precisely, in the PN model, each node $v$ has a port number in $\{1,\dotsc,\deg(v)\}$ assigned to each of its incident edges, and this port numbering is such that the incident edges of a node have pairwise different ports assigned to them. As in the case of the randomized \LOCAL model, in the randomized PN model each node has access to a stream of private random bits, and a randomized algorithm  must provide a global solution that is correct with high probability. For technical reasons, we will assume that edges are equipped with a port numbering as well, that is, each edge is provided with a port number in $\{1,2\}$ for its two endpoints, determining a consistent orientation of the edges. This is just a technical detail that makes the PN model only stronger, which may only make the task of proving lower bounds harder.

\subsection{Problems}\label{subsec:problems}

As already mentioned, we will use the round elimination framework to show our lower bounds. This formalism is expressive enough to include \emph{locally checkable} problems, as long as the description of these problems does not involve small cycles. Locally checkable problems are such that their solution is globally correct if it is locally correct in all constant-radius neighborhoods. (For simplicity, in the remainder of the paper we use the term ``locally checkable problems'' to refer to all those problems that are locally checkable and that do not involve small cycles.) Before going into details of the round elimination technique, we first see how problems are defined in this framework, what it means to correctly solve such problems, and how we represent them. Then, as an example, we see how to encode the problem of computing an MIS in this framework.

\paragraph{Characterization of a Problem in the Framework.}  For simplicity, we will only focus on problems defined on $\Delta$-regular graphs, as this will be enough for the purposes of our paper. Such problems, in the round elimination framework, are characterized by the following three objects.
\begin{itemize}
	\item An alphabet $\Sigma$ of possible labels; a \emph{configuration} is a word over the alphabet $\Sigma$.
	\item \emph{Node constraint} $\nodeconst$: a collection of configurations of length $\Delta$.
	\item \emph{Edge constraint} $\edgeconst$: a collection of configurations of length $2$.
\end{itemize}
The order of the elements of a configuration does not matter, and the same label may appear several times in a configuration. We call a configuration in $\nodeconst$ a \emph{node configuration}, and similarly we call a configuration in $\edgeconst$ an \emph{edge configuration}.

\paragraph{Solving a Problem in the Framework.} Given a graph $G=(V,E)$ and the set $A=\{(v,e)\in V\times E~|~ v\in e\}$ of all (node, incident edge) pairs, we say that an algorithm correctly solves a problem characterized by $\Sigma$, $\nodeconst$, $\edgeconst$, if, (1) for each node $v\in V$ it holds that $v$ has assigned an element of $\Sigma$ on each incident edge $(v,e)\in A$ forming a configuration in $\nodeconst$, and (2) for each edge $e=(u,v)$ it holds that the elements on the endpoints $(u,e), (v,e)$ form a configuration in $\edgeconst$.

\paragraph{Representation of Problems in the Framework.} We represent the node and edge configurations using regular expressions. For example, we will denote with $\P\s\O^{\Delta-1}$ a node configuration that consists of one label $\P$ and $\Delta - 1$ labels $\O$. We will use $[\L_1\L_2]$ to represent the fact that we have a choice between the two labels $\L_1$ and $\L_2$. For example, we use $\M\s[\P\O]$ to represent the edge configuration that on one side has label $\M$ and on the other side has either label $\P$ or label $\O$. In other words, $\M\s[\P\O]$ is a condensed way to represent both configurations $\M\s\P$ and $\M\s\O$. We will call $[\L_1 \L_2]$ a \emph{disjunction}. Note that a configuration that contains a disjunction is, per se, a collection of configurations. For simplicity, we will refer to a configuration containing a disjunction as a \emph{condensed configuration}. We will say that a configuration $C$ is \emph{contained} in a condensed configuration $C'$ if there is a choice in $C'$ that results in $C$. For example, we say that the configuration $\M\s\P$ is in the condensed configuration $\M\s[\P\O]$.

\paragraph{Example: MIS.} As an example, let us see how one can encode the maximal independent set problem in the round elimination framework. For this problem, we define $\Sigma=\{\M,\P,\O\}$ (note that it has been shown that we need $3$ labels in order to encode MIS in this framework~\cite{binaryLCL}). The node constraint $\nodeconst$ consists of two configurations: one used by the nodes in the MIS and the other used by nodes not in the MIS. Intuitively, nodes in the MIS will output $\M$ on each incident edge, indicating that they are in the MIS. Nodes not in the MIS will use label $\P$ to \emph{point} to exactly one neighbor in the MIS, ensuring maximality, and label $\O$ (as ``other'') on the other $\Delta-1$ incident edges. The edge constraint $\edgeconst$ will take care of ensuring the independence: in fact, the configuration $\M\s\M$ is not allowed, meaning that two nodes in the MIS cannot be neighbors. Also, the edge constraint must ensure that the nodes not in the MIS correctly point at nodes in the MIS ($\M\s\P\in\edgeconst$), and that they cannot point to each other ($\P\s\P\notin\edgeconst$) or to nodes not in the MIS ($\O\s\P\notin\edgeconst$). Moreover, since two nodes not in the MIS can be neighbors, we allow the configuration $\O\s\O$ in \edgeconst. More precisely, the node and edge constraint for the MIS problem are the following.

\begin{equation*}
	\begin{aligned}
		\begin{split}
			\nodeconst\text{:} \\ 
			& \quad\M^{\Delta} \\
			& \quad\P\s\O^{\Delta-1} 
		\end{split}
		\qquad
		\begin{split}
			\edgeconst\text{:} \\
			& \quad \M\s[\P\O] \\
			& \quad \O\s\O
		\end{split}
	\end{aligned}
\end{equation*}

\subsection{Round Elimination}\label{sec:re}

The technique that we use for showing our lower bounds is the so-called \emph{round elimination} technique. More precisely, we will use the result of \cite[Theorem 4.3]{Brandt2019}, which informally says how to use, in an automatic way, the round elimination technique on locally checkable problems, if the input graph is a $\Delta$-regular high-girth graph.  In particular, given a locally checkable problem $\Pi$ with time complexity $T$, this theorem says how to construct a problem $\Pi''$ having time complexity exactly $\max\{T-1,0\}$. The procedure for obtaining $\Pi''$ actually goes through an intermediate step: from $\Pi$ with time complexity $T$ it first constructs a problem $\Pi'$, then from $\Pi'$ it constructs problem $\Pi''$ with time complexity $T-1$. Since $\Pi'$ is uniquely defined by $\Pi$, and $\Pi''$ is uniquely defined by $\Pi'$, we define a function $\re(\cdot)$ that with $\Pi$ as input returns $\Pi'$, and a function $\rere(\cdot)$ that with $\Pi'$ as input returns $\Pi''$. Hence, we have that $\Pi''=\rere(\re(\Pi))$.

Let $\Pi$ be the problem characterized by the alphabet set $\Sigma_{\Pi}$, by the node constraint $\nodeconst_{\Pi}$, and by the edge constraint $\edgeconst_{\Pi}$. In the following, we will formally show how to construct problems $\Pi'=\re(\Pi)$ and $\Pi''=\rere(\re(\Pi))$.

\paragraph{Problem $\Pi'=\re(\Pi)$.} In order to define $\Pi'$, we must define $\Sigma_{\Pi'}$, $\nodeconst_{\Pi'}$, and $\edgeconst_{\Pi'}$.

\begin{itemize}

	\item $\edgeconst_{\Pi'}$: the edge constraint of problem $\Pi'$ is defined as follows. Consider the collection $\mathcal{A}$ of all configurations $\A_1 \s \A_2$ where $\A_1,\A_2\in 2^{\Sigma_{\Pi}}\setminus\{\{\}\}$, such that, \emph{for all} $(\a_1, \a_2) \in \A_1 \times \A_2$, it holds that $\a_1 \s \a_2\in \edgeconst_{\Pi}$. Before defining the edge constraint of $\Pi'$ we must introduce the notion of \emph{non-maximality}. Informally, we say that a configuration $\A_1 \s \A_2$ is non-maximal if there exists another configuration $\A'_1 \s \A'_2$ that is obtained starting from $\A_1 \s \A_2$ and adding at least one element to at least one of $\A_1$ and $\A_2$. More formally, $\A_1 \s \A_2 \in \mathcal{A}$ is non-maximal if there exists another configuration $\A'_1 \s \A'_2 \in \mathcal{A}$ such that, $\forall i \in \{1,2\}, \A_i \subseteq \A'_i$ and $\exists i \in \{1,2\} : \A_i \subsetneq \A'_i$. Let $\mathcal{S}$ be the set of all non-maximal configurations in $\mathcal{A}$. The edge constraint of $\Pi'$ is then defined as $\edgeconst_{\Pi'}=\mathcal{A}\setminus\mathcal{S}$.
	
	\item $\Sigma_{\Pi'}$: the alphabet set $\Sigma_{\Pi'}$ is defined as a subset of the set containing all non-empty subsets of $\Sigma_{\Pi}$, that is $\Sigma_{\Pi'} \subseteq 2^{\Sigma_{\Pi}}\setminus\{\{\}\}$. In particular, it contains all and only the sets that appear at least once in the configurations of $\edgeconst_{\Pi'}$.
	
	\item $\nodeconst_{\Pi'}$: the node constraint of problem $\Pi'$ is defined as the collection of all configurations $\B_1\s \B_2\s\ldots\s \B_\Delta$ such that, $\forall i\in\{1,\dotsc,\Delta\}$, $\B_i \in \Sigma_{\Pi'}$, and it holds that \emph{there exists} $(\b_1,\dotsc,\b_\Delta) \in \B_1 \times \dotsc \times \B_\Delta$ satisfying that $\b_1\s \b_2\s\dotsc\s \b_\Delta \in \nodeconst_{\Pi}$.
	
\end{itemize}

\paragraph{Problem $\Pi''=\rere(\re(\Pi))$.} In order to define $\Pi''$, we must define $\Sigma_{\Pi''}$, $\nodeconst_{\Pi''}$, and $\edgeconst_{\Pi''}$.

\begin{itemize}
	\item $\nodeconst_{\Pi''}$: the node constraint of problem $\Pi''$ is defined as follows. Consider the collection $\mathcal{B}$ of all configurations $\B_1\s \B_2\s\ldots\s \B_\Delta$ where $\B_1, \B_2,\ldots, \B_\Delta\in 2^{\Sigma_{\Pi'}}\setminus\{\{\}\}$, such that, \emph{for all} $(\b_1,\dotsc,\b_\Delta) \in \B_1 \times \dotsc \times \B_\Delta$, it holds that $\b_1,\dotsc,\b_\Delta\in \nodeconst_{\Pi'}$. Let $\mathcal{S}$ be the set of all non-maximal configurations in $\mathcal{B}$. The node constraint of $\Pi''$ is then defined as $\nodeconst_{\Pi''}=\mathcal{B}\setminus\mathcal{S}$.
	
	\item $\Sigma_{\Pi''}$: the alphabet set $\Sigma_{\Pi''}$ is defined as a subset of the set containing all non-empty subsets of $\Sigma_{\Pi'}$, that is $\Sigma_{\Pi''}=\subseteq 2^{\Sigma_{\Pi'}}\setminus\{\{\}\}$. In particular, it contains all and only the sets that appear at least once in the configurations of $\edgeconst_{\Pi''}$.
	
	\item $\edgeconst_{\Pi''}$: the edge constraint of problem $\Pi''$ is defined as the collection of all configurations $\A_1\s \A_2$ such that, $\forall i\in\{1,2\}$, $\A_i$ appears in at least one configuration in $\nodeconst_{\Pi''}$, and it holds that \emph{there exists} $(\a_1,\a_2) \in \A_1 \times \A_2$ satisfying that $\a_1\s \a_2 \in \edgeconst_{\Pi'}$.
	
\end{itemize}
Note that there is a simple method for computing $\nodeconst_{\Pi'}$: take the node constraint of $\Pi$ and replace each label $\y$ in each configuration by the disjunction of all label sets in $\Sigma_{\Pi'}$ that contain $\y$. In a similar manner, we can also compute $\edgeconst_{\Pi''}$.

For more concrete examples of constructions of $\re(\Pi)$ and $\rere(\re(\Pi))$ where $\Pi$ is, for example, the sinkless orientation problem, we refer the reader to the tutorial of the round eliminator tool~\cite{Olivetti2019}.

The result of Brandt~\cite[Theorem 4.3]{Brandt2019} proves a useful relation between $\Pi$ and $\rere(\re(\Pi))$.

\begin{theorem}[\cite{Brandt2019}, rephrased]\label{thm:sebastien}
	Consider a class $\mathcal G$ of graphs with girth at least $2T+2$, and some locally checkable problem $\Pi$. 
	Then, there exists an algorithm that solves problem $\Pi$ on $\mathcal G$ in $T$ rounds if and only if there exists an algorithm that solves problem $\rere(\re(\Pi))$ in $\max\{T-1,0\}$ rounds.
\end{theorem}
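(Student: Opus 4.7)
The plan is to prove both directions of the equivalence separately, exploiting the high-girth assumption (girth $\geq 2T+2$) to ensure that every $T$-radius neighborhood in $\mathcal{G}$ is a tree. I would proceed in two half-steps: first showing that $\Pi$ is solvable in $T$ rounds if and only if $\re(\Pi)$ is solvable in ``$T-1/2$'' rounds (i.e.\ in $T-1$ rounds while shifting the labels from (node, edge)-pairs to edge constraints viewed from each side), and then that $\re(\Pi)$ is solvable at that intermediate complexity if and only if $\rere(\re(\Pi))$ is solvable in $T-1$ rounds. Composing the two half-steps gives the claimed single-round equivalence.

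For the forward direction, let $\mathcal{A}$ be a $T$-round algorithm for $\Pi$. I would construct an algorithm for $\re(\Pi)$ as follows. After $T-1$ rounds, each node $v$ has seen its $(T-1)$-neighborhood but not the last hop beyond its neighbors. For each incident edge $(v,e)$, define the output as the set $S_{v,e} \in \Sigma_{\re(\Pi)}$ consisting of every label in $\Sigma_\Pi$ that $\mathcal{A}$ could output on $(v,e)$ over some extension of the unseen last-hop data on the far side of $e$. This set is nonempty and belongs to $\Sigma_{\re(\Pi)}$; the key claim is that on every edge $e=\{u,v\}$ the pair $(S_{v,e}, S_{u,e})$ satisfies the ``for all'' edge constraint of $\re(\Pi)$. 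By high girth, the unseen branches beyond $v$ and beyond $u$ through $e$ are independent subtrees, so any $\a\in S_{v,e}$ and $\a'\in S_{u,e}$ are simultaneously realizable by some global extension; correctness of $\mathcal{A}$ then forces $\a\s\a' \in \edgeconst_\Pi$. The ``there exists'' node constraint of $\re(\Pi)$ is satisfied with witness equal to $\mathcal{A}$'s actual output for any fixed completion. Non-maximal $S_{v,e}$ can be inflated to maximal supersets without hurting either constraint. A symmetric step, now applied to node configurations, passes from $\re(\Pi)$ to $\rere(\re(\Pi))$: the ``for all'' node constraint of $\rere(\re(\Pi))$ is discharged by outputting the set of possible node outputs, while the ``there exists'' edge constraint is witnessed by the actual edge label chosen in the previous step.

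For the backward direction, given a $(T-1)$-round algorithm $\mathcal{B}$ for $\rere(\re(\Pi))$, I would use one additional round so that each node $v$ sees, for each incident edge $e$, the $\rere(\re(\Pi))$-labels produced by $\mathcal{B}$ on both sides -- which are sets of subsets of $\Sigma_\Pi$. By the ``there exists'' edge constraint of $\rere(\re(\Pi))$, each edge admits a compatible $\re(\Pi)$-pair; by the ``for all'' node constraint, the tuple of chosen subsets at $v$ contains a witness configuration in $\nodeconst_\Pi$. Fixing a canonical total order on $\Sigma_{\re(\Pi)}$ and $\Sigma_\Pi$ and selecting lexicographically minimal compatible witnesses at each edge and node yields a valid $\Pi$-output, computed in $T$ rounds overall.

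The main obstacle I expect is the consistency argument in the backward direction: the two endpoints of every edge must agree on the chosen $\re(\Pi)$-pair, and each node must agree with each of its incident edges on the resulting label in $\Sigma_\Pi$. The high-girth hypothesis is essential here, as it ensures that canonical selections computed independently at different nodes are based on disjoint pieces of information and thus automatically cohere. A second subtle point is the treatment of non-maximal configurations in the definitions of $\re$ and $\rere$: in the forward direction one must \emph{inflate} any non-maximal set to a maximal one, and in the backward direction one must check that the existential witnesses remain available after the non-maximal pruning. Both are handled by observing that the maximality filter is a monotone closure that only discards configurations already dominated by others; I would formalize this by induction on the layer ($\re$, then $\rere$), mirroring the composition of the two half-steps.
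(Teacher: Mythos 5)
This theorem is not proved in the paper; it is cited verbatim from \cite{Brandt2019} (Theorem~4.3 there), so your attempt is being compared against the original round-elimination argument rather than anything appearing here. Your overall architecture (two half-steps, over-approximating forward, coherent selection backward, handling maximality by inflation/deflation) is the right skeleton, and the backward direction is essentially correct --- although note that the coherence of the lexicographic choices there comes from the ``for all'' conditions in $\edgeconst_{\re(\Pi)}$ and $\nodeconst_{\rere(\re(\Pi))}$, not from high girth, which is needed only for the forward direction.

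The forward direction has a genuine gap. You define $S_{v,e}$ as something $v$ computes from its $(T-1)$-ball by ranging over unseen extensions. But the output of the first half-step must be a function of the $(2T-1)$-ball of the \emph{edge} $e$ in the bipartite incidence graph, i.e.\ of the union of the $(T-1)$-balls of \emph{both} endpoints $u$ and $v$; it is precisely the \emph{second} half-step (producing the $\rere(\re(\Pi))$-label) that is computable by $v$ from its $(T-1)$-view. With the correct definition, the data that $S_{v,e}$ ranges over is the $T$-shell of $v$ \emph{away from} $u$, the data that $S_{u,e}$ ranges over is the $T$-shell of $u$ \emph{away from} $v$, and these are disjoint subtrees by high girth --- that is what makes any $(\a,\a')\in S_{v,e}\times S_{u,e}$ simultaneously realizable. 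Under your definition, the set $S_{v,e}$ ranges over extensions of $v$'s entire $T$-shell, which includes the nodes at distance $T-1$ from $u$ on the far side of $e$; but those nodes lie in $u$'s $(T-1)$-ball and were \emph{used} by $u$ to compute $S_{u,e}$. An extension $E_1$ witnessing $\a\in S_{v,e}$ may contradict $u$'s actual $(T-1)$-view, so $\a$ and $\a'$ are not simultaneously realizable in any single high-girth graph, and ``correctness of $\mathcal{A}$'' gives you nothing. Equivalently: if you instead try to restrict $S_{v,e}$ to extensions of ``the far side of $e$'' only, it ceases to be a function of $v$'s $(T-1)$-ball and hence is not computable after $T-1$ rounds by $v$, only by (the midpoint of) $e$. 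The same conflation also affects your claim that non-maximal $S_{v,e}$ can be inflated unilaterally: the inflation must be chosen consistently at both endpoints, which again requires the edge-centric $(2T-1)$-view. Repairing the argument means making the half-round step explicitly edge-centric (a $(2T-1)$-time algorithm on the incidence graph) and only attributing the $(T-1)$-round node computation to the second half-step.
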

For technical reasons, the above theorem holds for the port numbering model, and that is why we first prove our lower bounds for the port numbering model and then we lift these results to the \LOCAL model. Also, Theorem~\ref{thm:sebastien} holds even if nodes are provided with some input, if this input satisfies a property called \emph{$t$-independence}. In this work we assume that nodes are provided with a $\Delta$-edge coloring, and that each node knows the color of its incident edges. This input satisfies the property required to apply Theorem ~\ref{thm:sebastien}.

\paragraph{Node/Edge Diagram: Relation Between Labels.} 
Let $\Pi$ be a locally checkable problem, characterized by $\Sigma_{\Pi}$, $\nodeconst_{\Pi}$, and $\edgeconst_{\Pi}$, and let $\A,\B$ be two labels in $\Sigma_{\Pi}$. We now define a relation between labels, which will be useful later when we will compute $\re(\Pi)$ or $\rere(\Pi)$ for our problems of interest. We say that label $\A$ is \emph{at least as strong as} label $\B$ according to $\edgeconst_\Pi$, and equivalently $\B$ is \emph{at least as weak as} label $\A$ according to $\edgeconst_\Pi$, if for each edge configuration $C\in\edgeconst_\Pi$ containing $\B$, the operation of replacing one occurrence of $\B$ in $C$ by $\A$ results in a configuration that is also in $\edgeconst_\Pi$. Also, we say that $\A$ is \emph{stronger than} $\B$, and equivalently $\B$ is \emph{weaker than} $\A$, if $\A$ is at least as strong as $\B$ and $\B$ is not at least as strong as $\A$. The relations between labels of a problem $\Pi$ according to the edge constraint are illustrated by the \emph{edge diagram} of $\Pi$. Such a diagram is a directed graph where nodes are labels in $\Sigma_{\Pi}$ and there is a directed edge from label $\A$ to label $\B$ if $\B\neq \A$, $\B$ is at least as strong as $\A$, and there does not exist a label $\Z\in\Sigma_{\Pi}$ such that $\B$ is stronger than $\Z$ and $\Z$ is stronger than $\A$ (that is, the diagram illustrates  all those strength relations that cannot be decomposed into ``smaller'' strength relations). Analogously, we define the above relations \emph{according to the node constraint}. Similarly, we illustrate with the \emph{node diagram} the strength relations of labels according to the node constraint. For example, consider the MIS problem defined in Section~\ref{subsec:problems}. The edge diagram of the MIS problem is shown in Figure \ref{fig:mis}.

\begin{figure}[h]
	\centering
	\includegraphics[width=0.13\textwidth]{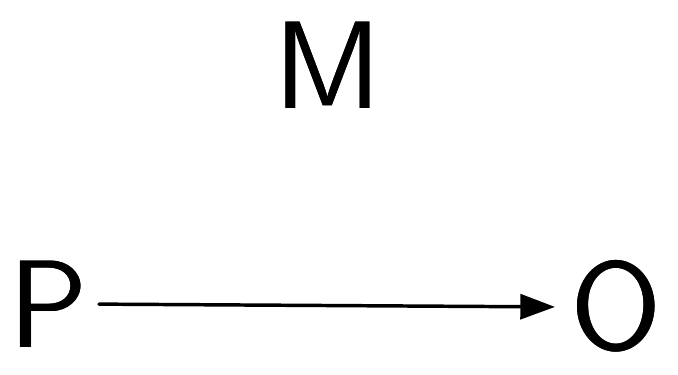}
	\caption{Relation between the labels of the MIS problem according to the edge constraint: $\O$ is stronger than $\P$, and there is no relation between labels $\M$ and $\P$, and between $\M$ and $\O$.}
	\label{fig:mis}
\end{figure}

\paragraph{Right-closed Sets.}
Consider a set $S = \{ \A_1, \dots, \A_p \} \subseteq \Sigma_\Pi$. We say that $S$ is \emph{right-closed} if and only if, for each $\A_i\in S$ it holds that also all successors of $\A_i$ are in $S$. Balliu, Brandt, and Olivetti~\cite{balliurules} proved the following observation about the relation of right-closed sets and the definitions of $\re(\cdot)$ and $\rere(\cdot)$.

\begin{observation}[\cite{balliurules}, Observation 8 in the arXiv version]\label{obs:rcs}
	Consider an arbitrary collection of labels $\A_1, \dots, \A_p \in \Sigma_\Pi$.
	If $\{ \A_1, \dots, \A_p \} \in \Sigma_{\re(\Pi)}$, then the set $\{ \A_1, \dots, \A_p \}$ is right-closed (w.r.t.\ $\edgeconst_\Pi$).
	If $\{ \A_1, \dots, \A_p \} \in \Sigma_{\rere(\Pi)}$, then the set $\{ \A_1, \dots, \A_p \}$ is right-closed (w.r.t.\ $\nodeconst_\Pi$).
\end{observation}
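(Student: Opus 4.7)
The plan is to prove both halves by a short contradiction argument that exploits the maximality built into the definitions of $\re(\cdot)$ and $\rere(\cdot)$. In both cases the mechanism is the same: the ``stronger than'' relation was designed precisely so that replacing a label by a stronger one preserves validity, and therefore also adding a stronger label to a set of labels preserves the existential/universal check used to construct $\edgeconst_{\re(\Pi)}$ or $\nodeconst_{\rere(\Pi)}$. Maximality then forces any stronger label to already lie in the set.

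For the first statement, suppose $\{\A_1,\dots,\A_p\}\in\Sigma_{\re(\Pi)}$. By definition of $\Sigma_{\re(\Pi)}$, this set appears in some edge configuration $\{\A_1,\dots,\A_p\}\s\B\in\edgeconst_{\re(\Pi)}$, and by construction this configuration is a maximal element of the collection $\mathcal{A}$. Fix any $\A_i$ and any $\Z\in\Sigma_\Pi$ that is stronger than $\A_i$ w.r.t.\ $\edgeconst_\Pi$. I would argue $\Z\in\{\A_1,\dots,\A_p\}$ by contradiction. Assuming not, consider $\{\A_1,\dots,\A_p,\Z\}\s\B$: for every $\b\in\B$ and every $\A_j$ we have $\A_j\s\b\in\edgeconst_\Pi$ by the defining property of $\mathcal{A}$, and since $\A_i\s\b\in\edgeconst_\Pi$ and $\Z$ is at least as strong as $\A_i$, replacing $\A_i$ by $\Z$ gives $\Z\s\b\in\edgeconst_\Pi$. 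Hence $\{\A_1,\dots,\A_p,\Z\}\s\B\in\mathcal{A}$, which strictly dominates $\{\A_1,\dots,\A_p\}\s\B$, contradicting its maximality.

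The second statement is proved by the symmetric argument applied to node constraints. Suppose $\{\A_1,\dots,\A_p\}\in\Sigma_{\rere(\Pi)}$; then this set appears in some maximal node configuration $\{\A_1,\dots,\A_p\}\s\B_2\s\dots\s\B_\Delta\in\nodeconst_{\rere(\Pi)}$, that is, a maximal element of the corresponding collection $\mathcal{B}$. Given any $\A_i$ and any $\Z$ stronger than $\A_i$ w.r.t.\ $\nodeconst_\Pi$, I would again suppose $\Z\notin\{\A_1,\dots,\A_p\}$ and show $\{\A_1,\dots,\A_p,\Z\}\s\B_2\s\dots\s\B_\Delta\in\mathcal{B}$, contradicting maximality. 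The verification amounts to checking every tuple: tuples whose first entry is some $\A_j$ yield valid node configurations in $\nodeconst_\Pi$ by the defining universal property of the original configuration, and tuples whose first entry is $\Z$ are handled by replacing $\Z$ with $\A_i$ in the tuple (which yields a valid tuple by the previous case) and then invoking the definition of ``$\Z$ stronger than $\A_i$ w.r.t.\ $\nodeconst_\Pi$'' to substitute $\Z$ back in while staying in $\nodeconst_\Pi$.

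The argument is essentially a short unfolding of definitions and I do not expect a real obstacle. The one subtlety worth flagging is the quantifier pattern in the $\rere$ case: unlike edges, where a configuration has only two slots, node configurations have $\Delta$ slots, and one must verify that replacing a single coordinate by a stronger label suffices. This is exactly what the definition of ``stronger than according to $\nodeconst_\Pi$'' provides, so the product structure of the tuples in $\mathcal{B}$ does not cause trouble.
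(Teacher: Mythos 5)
Your argument is correct and is the intended one. Note that the paper does not supply its own proof of this observation; it cites it verbatim from the ruling-sets paper (Observation~8 in the arXiv version of~\cite{balliurules}), so there is no in-paper proof to compare against. Your proof is exactly the short unfolding of the definitions: a label set of $\re(\Pi)$ (resp.\ $\rere(\Pi)$) must occur in some configuration of $\edgeconst_{\re(\Pi)}$ (resp.\ $\nodeconst_{\rere(\Pi)}$), those configurations are by construction maximal elements of $\mathcal{A}$ (resp.\ $\mathcal{B}$), and the replacement property defining ``at least as strong'' shows that enlarging one slot by a stronger label keeps the configuration in $\mathcal{A}$ (resp.\ $\mathcal{B}$), contradicting maximality. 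Two small points are worth making explicit, and you handle both implicitly: (i) every label in $\Sigma_{\rere(\Pi)}$ indeed occurs in some configuration of $\nodeconst_{\rere(\Pi)}$, because $\edgeconst_{\rere(\Pi)}$ is built only from sets that appear in $\nodeconst_{\rere(\Pi)}$; and (ii) right-closedness asks for closure under all successors in the diagram, i.e.\ under all strictly stronger labels, and your argument works uniformly for any $\Z$ at least as strong as $\A_i$, so this is covered.
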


\subsection{Roadmap}
We will start by proving a lower bound for computing $k$-outdegree dominating sets in the deterministic port numbering model (Section \ref{sec:pn}). For this purpose, we will define a parameterized family of problems $\mathcal{P}$, and we will show that these problems can be solved in constant time given a $k$-outdegree dominating set, which means that if we can show a lower bound for some of the problems in $\mathcal{P}$, then we would obtain the same asymptotic lower bound for $k$-outdegree dominating sets.
Then, we will relate different problems of $\mathcal{P}$, by showing that some of them are strictly easier than others. We will apply this reasoning multiple times in order to build a long sequence of problems, such that each one is strictly easier than the previous one, and also the last one is not $0$-rounds solvable. In this way we will obtain that the length of the chain is a lower bound for $k$-outdegree dominating sets.

In order to relate two different problems $P,P'$ of $\mathcal{P}$ we use the round elimination technique. Informally this technique is usually applied in the following way: 
\begin{enumerate}
	\item start from a problem $P$ in our family of problems $\mathcal{P}$,
	\item apply the round elimination technique and obtain a new problem $\Pi$ that might not be a problem in $\mathcal{P}$,
	\item prove that $\Pi$ is not easier than some problem $P'\in \mathcal{P}$,
	\item this implies that $P'$ is at least one round easier than $P$.
\end{enumerate}
For the problem family that we define, this approach does not seem to work: by applying the round elimination technique we will obtain some problem that is very similar to the problems of the family, but it will also contain some additional allowed configuration. This additional configuration will prevent us from proving that the obtained problem is not easier than some problem of our family. Here lies the novelty of our approach: in order to get rid of such a configuration we will exploit a specific input given to the nodes (a $\Delta$-edge coloring) in such a way that nodes, in $0$ rounds, can convert their output to a different one that does \emph{not} use the additional configuration, and hence solve a problem that is actually in the family.

Finally, in Section \ref{sec:local} we will use standard techniques to convert our deterministic port numbering lower bound into a lower bound for the \LOCAL model.

\section{Lower bound for the port numbering model}\label{sec:pn}
In this section we will define a family of problems $\Pi_{\Delta}(a,x)$ that we will use to prove a lower bound for $k$-outdegree dominating sets.  The parameters $a$ and $x$ satisfy $0 \le a,x \le \Delta$, and intuitively by increasing $x$, or decreasing $a$, we will get easier problems. We will then relate the problems of the family, by showing that some of them are at least one round harder than others. In particular, we will prove that for $a$ large enough and $x$ small enough, $\Pi_{\Delta}(a,x)$ requires at least one round more than $\Pi_{\Delta}(\lfloor(a-2x-1)/2\rfloor,x+1)$. We will finally prove that, even for very relaxed problems of the family, that is those having small values of $a$ and large values of $x$, $\Pi_{\Delta}(a,x)$ is not $0$ rounds solvable. At the end we will combine all these results to show that, for a wide range of values of $k$, there exists a sequence of problems $\Pi_0 \rightarrow \Pi_1 \ldots \rightarrow \Pi_T$, where $\Pi_0$ can be solved in $1$ round given a solution for $k$-outdegree dominating sets, $\Pi_{i+1}$ is at least one round easier than $\Pi_i$, and $\Pi_T$ is not $0$ rounds solvable, implying a lower bound of $T$ rounds for $k$-outdegree dominating sets.

\subsection{The Problem Family}
On a high level, our problem family contains the following problems. Consider the following task: find an independent set of nodes (type-$1$ nodes), such that nodes that are not in the independent set either have a neighbor in the independent set (type-$2$ nodes), or they own at least $a$ of their incident edges (type-$3$ nodes; an edge can be owned by at most one of its endpoints). We define a problem that is a slight relaxation of this task. Type-$2$ nodes may not have any neighbor in the independent set, but in that case they must have at least one neighbor of type $3$ connected to them through a non-owned edge. Moreover, we allow type-$1$ nodes to violate the independence requirement: these nodes are allowed to be neighbors, but the edges between them should be properly oriented, such that at most $x$ edges are outgoing for each node in the set. See Figure~\ref{fig:problemFamily} for an example.
We will now define the problem $\Pi_{\Delta}(a,x)$ formally, by providing its label set and its node and edge constraints.

\begin{figure}[t]
	\centering
	\includegraphics[width=0.54\textwidth]{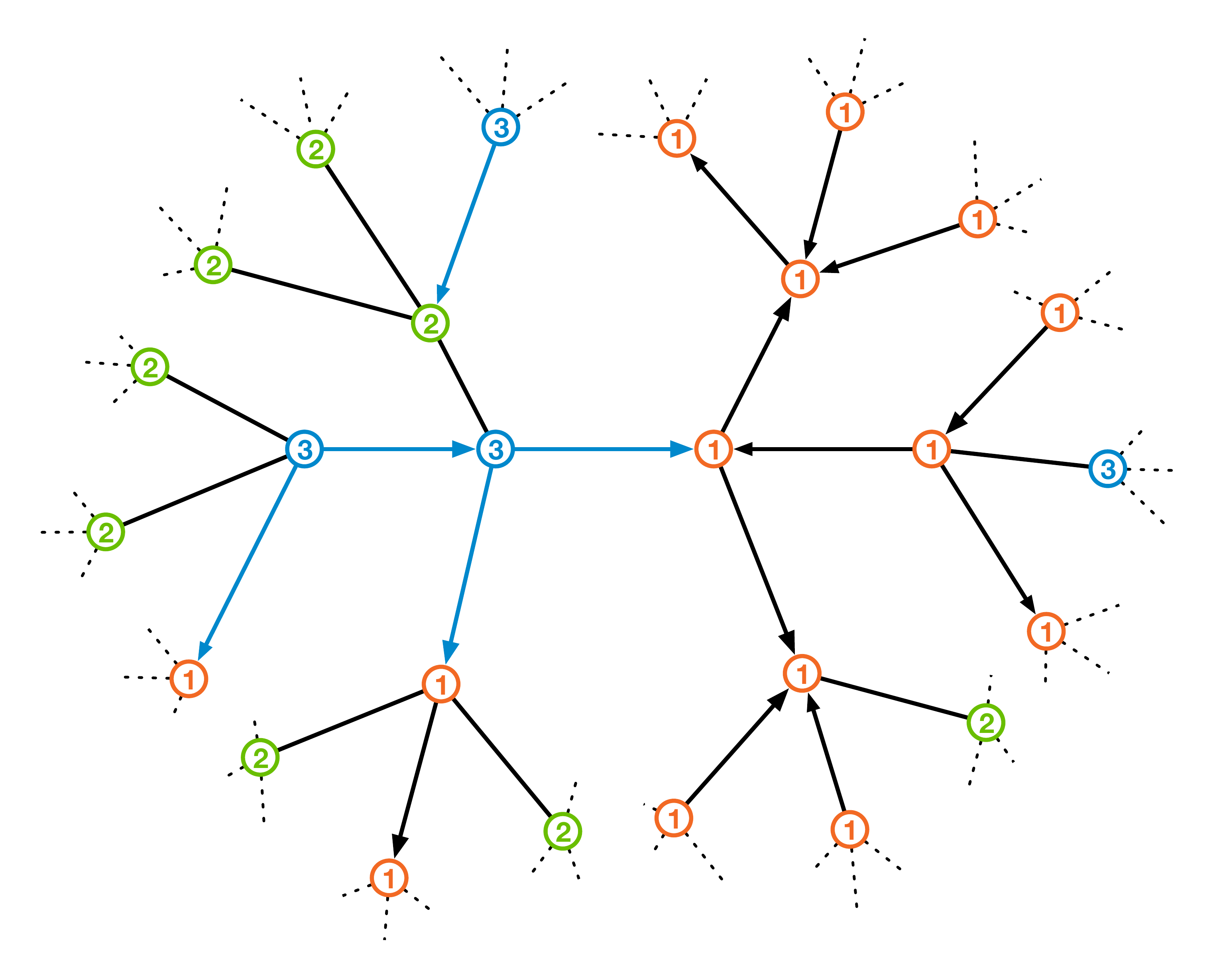}
	\caption{Example of a problem in our family, where $a=2$ and $x=2$. Directed blue edges that go from a type-$3$ node $u$ to a neighbor $v$ of $u$ are owned by node $u$.}
	\label{fig:problemFamily}
\end{figure}

\paragraph{Labels.} For all problems of the family, the label set $\Sigma_{\Delta}(a,x)$ is $\{\M, \P, \O, \A, \X\}$. Informally, these labels can be interpreted as follows:
\begin{itemize}
	\item The label $\M$ is used by nodes that are in the dominating set.
	\item Nodes not in the dominating set can use label $\P$ to point to a neighbor in the dominating set, and the label $\O$ is used to label the other incident edges.
	\item The label $\A$ is used by nodes to mark \emph{owned} edges, while label $\X$ is used to mark other incident edges. 
	\item Also, nodes of the dominating set can use label $\X$ to mark edges that connect them to neighbors that are also in the dominating set.
\end{itemize}

\paragraph{Node Constraint.} We now define the node constraint $\nodeconst_{\Delta}(a,x)$ of $\Pi_{\Delta}(a,x)$ by listing its allowed configurations. $\nodeconst_{\Delta}(a,x)$ contains the following:
\begin{itemize}
	\item  $\M^{\Delta-x} \X^x$. This configuration is used by nodes that are part of the dominating set.
	\item  $\A^{a} \s \X^{\Delta - a}$. This configuration is used by some nodes that are not in the dominating set in order to prove that they own at least $a$ edges.
	\item  $\P \s \O^{\Delta - 1}$. This configuration is used by some nodes $u$ that are not in the dominating set in order to prove that they have at least one neighbor in the dominating set (or a neighbor $v$ owning $a$ edges, but not the one connecting $v$ to $u$).
\end{itemize}

\paragraph{Edge Constraint.}
 We now define the edge constraint $\edgeconst_{\Delta}(a,x)$ of $\Pi_{\Delta}(a,x)$ by listing its allowed configurations. $\edgeconst_{\Delta}(a,x)$ contains the following:
\begin{itemize}
	\item $\M \s  [\P\A\O\X]$
	\item $\O  \s [\M\A\O\X]$
	\item $\P  \s  [\M\X]$
	\item $\A  \s  [\M\O\X]$
	\item $\X  \s  [\M\P\A\O\X]$
\end{itemize}

\begin{figure}[t]
	\centering
	\includegraphics[width=0.54\textwidth]{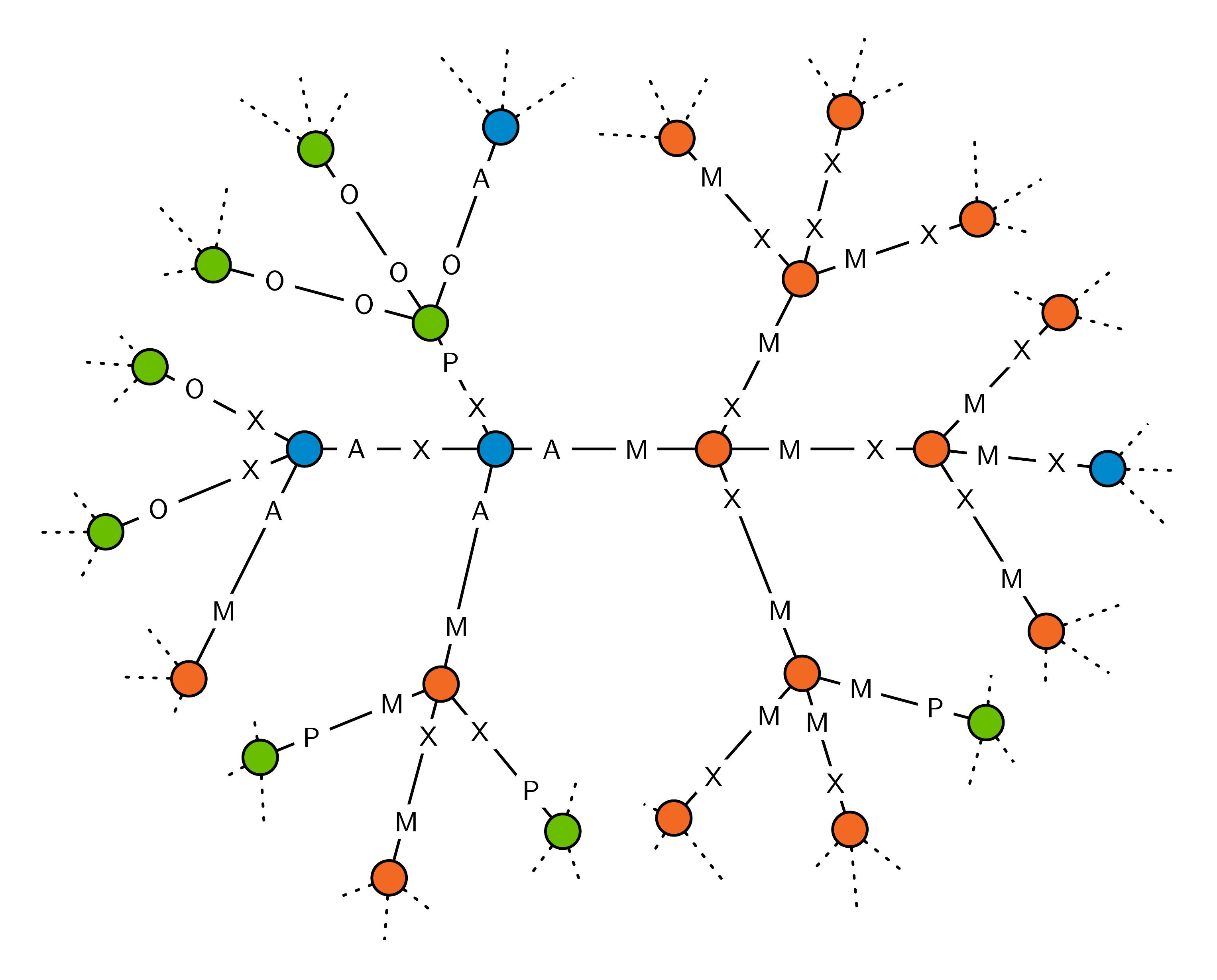}
	\caption{Example of an output labeling that solves a problem in our family, where $a=x=2$ and $\Delta=4$. Type-$1$, type-$2$, and type-$3$ nodes are represented by orange, green, and blue nodes, respectively.}
	\label{fig:problemFamily2}
\end{figure}

In other words, $\M$ is not compatible with $\M$, $\A$ is not compatible with $\A$, $\P$ is not compatible with $\P$, $\A$ or $\O$, while anything else is allowed. See Figure \ref{fig:problemFamily2} for an example.
The edge diagram of $\Pi_\Delta(a,x)$ is as given in Figure \ref{fig:firstdiagram}

\begin{figure}[ht]
	\centering
	\includegraphics[width=0.35\textwidth]{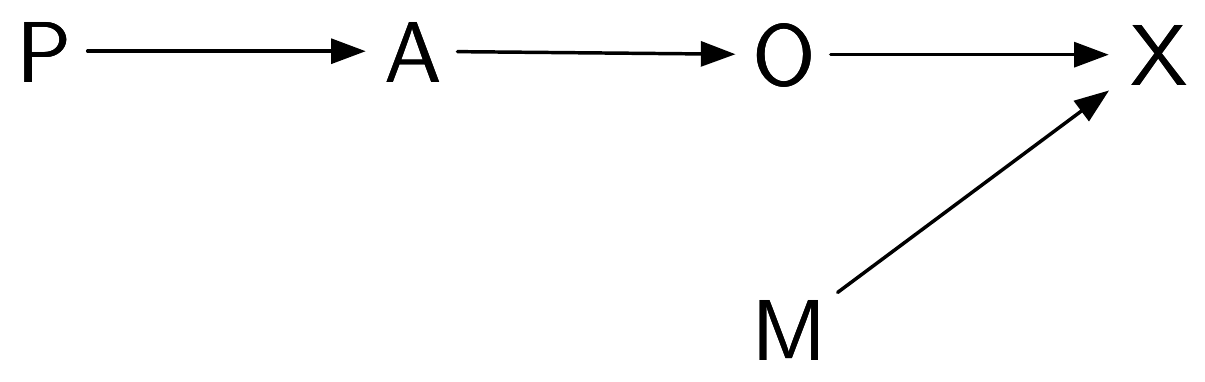}
	\caption{Relation between the labels of $\Pi_{\Delta}(a,x)$ according to the edge constraint.}
	\label{fig:firstdiagram}
\end{figure}

\subsection{Relation Between \texorpdfstring{\boldmath $k$}{k}-Outdegree Dominating Sets and the Problems of the Family}
We start by relating $k$-outdegree dominating sets with the problems of the family.
\begin{lemma}\label{lem:tofamily}
	Given a solution for the $k$-outdegree dominating set problem, we can solve $\Pi_{\Delta}(a,k)$ in $1$ round, for all $a$.
\end{lemma}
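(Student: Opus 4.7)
The plan is to give each node an explicit labeling that uses only two of the three node configurations — namely $\M^{\Delta-k}\X^k$ for dominating-set nodes and $\P\O^{\Delta-1}$ for everyone else. Because the configuration $\A^{a}\X^{\Delta-a}$ is never invoked, the value of $a$ is irrelevant and it suffices to prove the statement for every $a$ simultaneously.

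In a single round each node exchanges with its neighbors (i) whether it lies in the given $k$-outdegree dominating set $S$, and (ii) for every edge both of whose endpoints lie in $S$, the orientation that the dominating-set solution assigns to it. Each non-DS node $v$ then picks, using any fixed local tie-breaking rule (e.g.\ smallest port), one DS-neighbor $w$, which exists because $S$ dominates; it labels the edge $\{v,w\}$ with $\P$ and all remaining incident edges with $\O$, realizing the configuration $\P\s\O^{\Delta-1}$. Each DS node $v$ first labels every outgoing $G[S]$-edge with $\X$; since $v$ has outdegree at most $k$ in $G[S]$, this puts down at most $k$ copies of $\X$. If strictly fewer than $k$ edges carry $\X$, $v$ arbitrarily promotes additional incident edges from (would-be) $\M$ to $\X$ until exactly $k$ copies of $\X$ are used, and labels the remaining $\Delta-k$ incident edges with $\M$. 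This is always possible because $k\le\Delta$, so the configuration $\M^{\Delta-k}\X^k$ is realized.

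Checking $\edgeconst_\Delta(a,k)$ is immediate from the edge diagram. Any edge incident to an $\X$-label is automatically valid, since $\X$ is compatible with every label; in particular every $G[S]$-edge is of this form, as its source endpoint labels it $\X$, so the only potentially dangerous pair $\M\s\M$ never occurs. The remaining possibilities are $\M\s\P$ (a non-DS pointer meeting its DS target), $\M\s\O$ (a DS node meeting a non-pointing non-DS neighbor), $\O\s\O$ (two non-DS neighbors), and $\P\s\X$ (a non-DS pointer meeting a DS target that padded the corresponding edge to $\X$), all of which appear in $\edgeconst_\Delta(a,k)$.

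There is no real obstacle: the observation that drives the argument is that the universal compatibility of $\X$ makes the padding step free both locally (no neighbor's label can conflict with $\X$) and globally (we never need to pay anything for turning $\M$'s into $\X$'s to hit the required multiset), so the outdegree bound $d_{\mathrm{out}}(v)\le k$ translates directly into the required node configuration $\M^{\Delta-k}\X^k$.
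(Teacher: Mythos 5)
Your proposal is correct and matches the paper's proof essentially step for step: dominating-set nodes write $\X$ on their outgoing $G[S]$-edges and $\M$ elsewhere, pad to exactly $k$ copies of $\X$, non-dominating-set nodes spend the one round to locate a dominating neighbor and output $\P$ on that edge and $\O$ on the rest, and the verification of the edge constraint hinges on the same observation that every $G[S]$-edge carries an $\X$ at its tail so $\M\s\M$ never occurs. The only difference is cosmetic (you note explicitly that the $\A$-configuration is never used, hence $a$ is irrelevant, which the paper leaves implicit).
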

\begin{proof}
	We are given in input a solution to the $k$-outdegree dominating set problem, which means that nodes in the dominating set form a directed graph where every node has at most $k$ outgoing edges. Such nodes output $\X$ on the outgoing edges and $\M$ on the others. These nodes modify this labeling by writing $\X$ on some edges labeled $\M$ in order to obtain exactly $k$ edges labeled $\X$. Nodes not in the dominating set spend $1$ round to check which of the neighbors are in the dominating set, and they output $\P$ on one port connecting them to a dominating set neighbor and $\O$ on all the others. Clearly, the node constraint is satisfied. We now show that the edge constraint is satisfied as well.
	\begin{itemize}
		\item An edge that connects two nodes in the dominating set will never be labeled with $\M$ by both endpoints, since outgoing edges are marked $\X$. All the other labels used by the algorithm are $\X$, $\P$, and $\O$, which are all compatible with $\M$ and $\X$.
		\item If $\P$ is written on one side of an edge, then the other side is labeled with either $\M$ or $\X$, which are both compatible with $\P$.
		\item The label $\O$ cannot appear on the other side of an edge labeled $\P$, and all the other labels used by the algorithm, i.e., $\X$, $\M$, and $\O$, are all compatible with $\O$.
		\item The label $\X$ is compatible with everything.\hspace*{\fill}\qedhere
	\end{itemize}
\end{proof}

\subsection{Relation Between Problems of the Family}
In this section we prove that $\Pi_{\Delta}(\lfloor(a-2x-1)/2\rfloor,x+1)$ is at least as easy as $\rere(\re(\Pi_{\Delta}(a,x)))$, implying, by Theorem~\ref{thm:sebastien}, that $\Pi_{\Delta}(\lfloor(a-2x-1)/2\rfloor,x+1)$ is at least one round easier than $\Pi_{\Delta}(a,x)$, if $\Pi_{\Delta}(a,x)$ is not zero round solvable. We start by computing $\re(\Pi_{\Delta}(a,x))$ and use the obtained knowledge to prove that $\rere(\re(\Pi_{\Delta}(a,x)))$ is at least as hard as the following problem $\Pi^+_{\Delta}(a,x)$:
\begin{equation*}
\begin{aligned}
\begin{split}
\nodeconst^+_{\Delta}(a,x)\text{:} \\ 
& \quad\M^{\Delta-x-1} \X^{x+1} \\
& \quad\P\s\O^{\Delta-1} \\
& \quad\A^{a-x-1}\s\X^{\Delta-a+x+1}\\
& \quad\C^{\Delta-x}\s\X^{x} \\\\\\
\end{split}
\qquad
\begin{split}
\edgeconst^+_{\Delta}(a,x)\text{:} \\
& \quad \M\s[\P\A\C\O\X] \\
& \quad \O\s[\M\A\C\O\X] \\
& \quad \P\s[\M\X] \\
& \quad \A\s[\M\C\O\X] \\
& \quad \X\s[\M\P\A\C\O\X] \\
& \quad \C\s[\M\A\O\X]
\end{split}
\end{aligned}
\end{equation*}
Then, we will show how to exploit a given $\Delta$-edge coloring to convert any solution for such a problem into a solution for $\Pi_{\Delta}(\lfloor(a-2x-1)/2\rfloor,x+1)$. Intuitively, we will get rid of nodes labeled with the configuration $\C^{\Delta-x}\s\X^{x}$, by making them output the configuration $\A^{\lfloor(a-2x-1)/2\rfloor}\s\X^{\Delta-\lfloor(a-2x-1)/2\rfloor}$. In the process, also the output of nodes labeled with the configuration $\A^{a-x-1}\s\X^{\Delta-a+x+1}$ will be altered, by reducing their number of incident edges labeled $\A$ from $a-x-1$ to $\lfloor(a-2x-1)/2\rfloor$. Note that these changes in the output give a solution for $\Pi_{\Delta}(\lfloor(a-2x-1)/2\rfloor,x+1)$. We start by computing $\re(\Pi_{\Delta}(a,x))$.

\begin{lemma}\label{lem:computere}
	Assume that $x+2 \leq a \leq \Delta$.
	After renaming, the node constraint of $\re(\Pi_{\Delta}(a,x))$ is given by the configurations
	\begin{align*}
		& \quad[\zx\zo\za\zp]^{\Delta-x} [\xx\zx\xo\zo\xa\za\xp\zp]^{x} \\
		& \quad[\xp\zp]\s[\xo\zo\xa\za\xp\zp]^{\Delta-1} \\
		& \quad[\xa\za\xp\zp]^a [\xx\zx\xo\zo\xa\za\xp\zp]^{\Delta - a} 
	\end{align*}
	and the edge constraint of $\re(\Pi_{\Delta}(a,x))$ by the configurations
	\begin{align*}
		& \quad\xx\s\zp \\
		& \quad\xo\s\za \\
		& \quad\xa\s\zo \\
		& \quad\xp\s\zx
	\end{align*}
	In particular, the node diagram of $\re(\Pi_{\Delta}(a,x))$ is as given in Figure~\ref{fig:nodia}.
\end{lemma}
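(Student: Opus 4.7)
The plan is to apply the automatic round elimination recipe described in \sectionref{sec:re} directly, computing the edge constraint first, then reading off the alphabet, and finally obtaining the node constraint by the label-substitution shortcut mentioned at the end of that section.

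First I would work out the edge strength relations among the original labels $\{\M,\P,\O,\A,\X\}$, since by \lemmaref{obs:rcs} every set appearing in $\Sigma_{\re(\Pi_\Delta(a,x))}$ is right-closed with respect to the edge diagram of $\Pi_\Delta(a,x)$ shown in \figureref{fig:firstdiagram}. A routine check of $\edgeconst_\Delta(a,x)$ shows that $\X$ is the unique maximum (it is compatible with every label), $\O$ is strictly stronger than $\A$, which is strictly stronger than $\P$, and $\M$ is comparable only with $\X$. Hence the only right-closed subsets of $\Sigma_\Pi$ are the eight sets obtained by choosing an upward-closed prefix of the chain $\X{>}\O{>}\A{>}\P$ and independently deciding whether to include $\M$; in particular each right-closed set contains $\X$.

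Next I would compute, for each right-closed set $A$, its edge-partner $B \coloneqq \bigcap_{\y \in A}\{\z : \y\s\z \in \edgeconst_\Pi\}$, and then determine $\edgeconst_{\re(\Pi_\Delta(a,x))}$ by keeping only the maximal (unordered) pairs $(A,B)$. This gives exactly four pairs, which after renaming correspond to
\[
(\xx,\zp)=(\{\X\},\{\M,\P,\A,\O,\X\}),\quad (\xo,\za)=(\{\O,\X\},\{\M,\A,\O,\X\}),
\]
\[
(\xa,\zo)=(\{\O,\A,\X\},\{\M,\O,\X\}),\quad (\xp,\zx)=(\{\O,\A,\P,\X\},\{\M,\X\}),
\]
matching the four configurations in the statement. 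The alphabet $\Sigma_{\re(\Pi_\Delta(a,x))}$ is then the union of these eight sets. Under this renaming the Z-labels are exactly the right-closed sets containing $\M$ and every label contains $\X$, a bookkeeping fact I would record since it drives the rest of the proof.

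Finally, to obtain $\nodeconst_{\re(\Pi_\Delta(a,x))}$ I would apply the substitution shortcut: in each configuration of $\nodeconst_\Delta(a,x)$ replace each occurrence of a label $\y$ by the disjunction of the new labels whose underlying set contains $\y$. Since $\M$ lies in $\{\zx,\zo,\za,\zp\}$, $\X$ lies in all eight labels, $\A$ in $\{\xa,\za,\xp,\zp\}$, $\O$ in $\{\xo,\zo,\xa,\za,\xp,\zp\}$, and $\P$ in $\{\xp,\zp\}$, this yields directly the three configurations claimed in the lemma. For the node diagram of \figureref{fig:nodia}, I would compute, for each new label, the set of disjunctions in which it appears (there are five such disjunction slots $D_1,\dots,D_5$), and note that $\y$ is at least as strong as $\z$ with respect to the new node constraint iff $\y$'s disjunction-membership set contains $\z$'s. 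A mechanical comparison shows the eight labels sit at the eight vertices of the $4{\times}2$ grid obtained by adding successively $D_4,D_5,D_3$ along one axis (the $\xx\!\to\!\xo\!\to\!\xa\!\to\!\xp$ and $\zx\!\to\!\zo\!\to\!\za\!\to\!\zp$ chains) and adding $D_1$ along the other (the $\xx\!\to\!\zx$, $\xo\!\to\!\zo$, $\xa\!\to\!\za$, $\xp\!\to\!\zp$ edges), reproducing the diagram.

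The main obstacle is not conceptual but combinatorial: I have to be careful in verifying that the eight right-closed sets are exactly the candidates, that the paired $B$'s are truly the intersections of the $S(\y)$'s (and, crucially, that each resulting pair is maximal rather than dominated by a larger right-closed $A$ with the same $B$), and that the substitution step does not accidentally merge or drop any configuration. The hypothesis $x+2 \le a \le \Delta$ enters only to ensure that the three resulting node configurations remain well-formed and pairwise distinct, so that the node diagram really has the claimed eight-label structure rather than degenerating.
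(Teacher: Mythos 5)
Your proposal is correct and follows essentially the same route as the paper's proof: use Observation~\ref{obs:rcs} to restrict attention to the eight right-closed label sets (w.r.t.\ the edge diagram in Figure~\ref{fig:firstdiagram}), pair each with its unique maximal compatible partner to obtain the four edge configurations after discarding non-maximal pairs, rename, and then apply the substitution shortcut from \sectionref{sec:re} to read off the node constraint. Your additional remark on deriving the node diagram via disjunction-membership is a reasonable way to organize a check that the paper leaves implicit; just note that the ``only if'' direction of your stated equivalence is not automatic and does rely on the hypothesis $x+2\le a\le\Delta$ (e.g., to rule out $\xx$ being at least as strong as $\zx$), which you correctly flag at the end.
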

\begin{proof}

By Observation~\ref{obs:rcs}, the set of labels of $\re(\Pi_{\Delta}(a,x))$ is a subset of the set of right-closed sets of labels according to the diagram of $\Pi_\Delta(a,x)$, depicted in Figure \ref{fig:firstdiagram}. All possible right-closed sets are $\mathcal{S} = \{\mybox{\X},\mybox{\M\X},\mybox{\O\X},\mybox{\M\O\X},\mybox{\A\O\X},\mybox{\M\A\O\X},\mybox{\P\A\O\X},\mybox{\M\P\A\O\X}\}$, where $\mybox{L_1 \ldots L_k}$ represents the set of labels $\{L_1,\ldots,L_k\}$. In order to compute the edge constraint of $\re(\Pi_{\Delta}(a,x))$, we can pair each $S \in \mathcal{S}$ with the unique largest set of labels $S' \in \mathcal{S}$ satisfying that for any choice $(s,s') \in (S,S')$, the configuration $s \s s'$ is in $\edgeconst_{\Delta}(a,x)$. By performing such operation and removing all symmetric configurations, we obtain the following:
\begin{equation*}
\begin{aligned}
	 &\mybox{\X} \s \mybox{\M\P\A\O\X}\\
	 &\mybox{\M\X} \s \mybox{\P\A\O\X}\\
     &\mybox{\O\X} \s \mybox{\M\A\O\X}\\
	 &\mybox{\M\O\X} \s \mybox{\A\O\X}\\
\end{aligned}
\end{equation*}
Note that all obtained configurations are also maximal. Also, notice that, by renaming labels as in the following mapping, we obtain the required edge constraint.
\begin{align*}
\mybox{\X} &\mapsto \xx \\
\mybox{\M\X} &\mapsto \zx \\
\mybox{\O\X} &\mapsto \xo \\
\mybox{\M\O\X} &\mapsto \zo \\
\mybox{\A\O\X} &\mapsto \xa \\
\mybox{\M\A\O\X} &\mapsto \za \\
\mybox{\P\A\O\X} &\mapsto \xp \\
\mybox{\M\P\A\O\X} &\mapsto \zp \\
\end{align*}
Using the method explained in Section~\ref{sec:re}, we can compute the node constraint of $\re(\Pi_{\Delta}(a,x))$ by simply taking the node constraint of $\Pi_{\Delta}(a,x)$ and replacing each label $\y$ in each configuration by the disjunction of all label sets that appear in the edge constraint of $\re(\Pi_{\Delta}(a,x))$ and contain $\y$. Note that the original label $\M$ is contained in the sets represented by the new labels $\zx\zo\za\zp$, the original label $\X$ is contained all the new labels, $\A$ is contained in $\xa\za\xp\zp$, the original label $\O$ is contained in the new labels $\xo\zo\xa\za\xp\zp$, and the original label $\P$ is contained in the new labels $\xp\zp$. Hence, by applying the replacing method on the node constraints of  $\Pi_{\Delta}(a,x)$, we obtain the same node constraint as in the claim.
\end{proof}

\begin{figure}[t]
	\centering
	\includegraphics[width=0.35\textwidth]{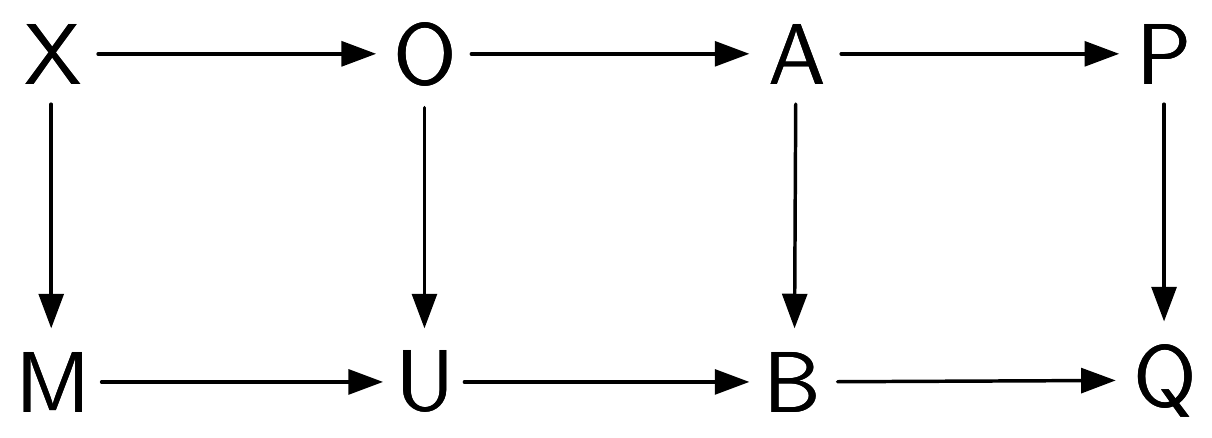}
	\caption{Relation between the labels of $\re(\Pi_{\Delta}(a,x))$ according to the node constraint.}
	\label{fig:nodia}
\end{figure}

We now compute $\rere(\re(\Pi_{\Delta}(a,x)))$. We will need the notion of a relaxation of a node configuration as given in~\cite{balliurules}.
\begin{definition}\label{def:relax}
	Let $\Y_1, \Y_2, \dots, \Y_\Delta$ be sets of labels.
	A relaxation of a node configuration $\Y_1\s\Y_2\s\dots\s\Y_\Delta$ is a node configuration $\Z_1\s\Z_2\s\dots\s\Z_\Delta$ such that there exists a permutation $\rho \colon \{1, \dots, \Delta\} \to \{1, \dots, \Delta\}$ such that, for each $1 \leq i \leq \Delta$, we have $\Y_i \subseteq Z_{\rho(i)}$.
	If $\Z_1\s\Z_2\s\dots\s\Z_\Delta$ is a relaxation of $\Y_1\s\Y_2\s\dots\s\Y_\Delta$, then we say that $\Y_1\s\Y_2\s\dots\s\Y_\Delta$ \emph{can be relaxed to} $\Z_1\s\Z_2\s\dots\s\Z_\Delta$.
\end{definition}

\begin{lemma}\label{lem:speedup}
	If the problem $\Pi_{\Delta}(a,x)$ has time complexity $T$, then $\Pi^+_{\Delta}(a,x)$ has time complexity $\max\{T-1,0\}$, for all $a, x$ satisfying $x+2 \leq a \leq \Delta$.
\end{lemma}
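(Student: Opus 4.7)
The strategy is to invoke Theorem~\ref{thm:sebastien} on $\Pi_{\Delta}(a,x)$: solving $\Pi_{\Delta}(a,x)$ in $T$ rounds is equivalent to solving $\rere(\re(\Pi_{\Delta}(a,x)))$ in $\max\{T-1,0\}$ rounds. Therefore, it suffices to exhibit a $0$-round reduction from $\Pi^+_{\Delta}(a,x)$ to $\rere(\re(\Pi_{\Delta}(a,x)))$, i.e.\ to show that any valid output of the latter can be relabeled locally into a valid output of the former. The natural way to achieve this is to compute $\rere(\re(\Pi_{\Delta}(a,x)))$ explicitly, starting from the description of $\re(\Pi_{\Delta}(a,x))$ given by Lemma~\ref{lem:computere}, and then match the result against $\Pi^+_{\Delta}(a,x)$ after an appropriate renaming (possibly composed with the relaxation of Definition~\ref{def:relax}).

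First, I would enumerate the right-closed sets of labels of $\re(\Pi_{\Delta}(a,x))$ with respect to its node diagram in Figure~\ref{fig:nodia}, as required by Observation~\ref{obs:rcs}. Among these, only a handful actually appear in the edge constraint of $\rere(\re(\Pi_{\Delta}(a,x)))$: after renaming, these should be exactly six, corresponding to $\M,\P,\O,\A,\X,\C$ of $\Pi^+_{\Delta}(a,x)$. The new label $\C$ should arise from the right-closed set that sits strictly between those renamed to $\M$ and to $\A$ in the strength order, which is precisely why the $\C$-row of $\edgeconst^+_{\Delta}(a,x)$ mirrors the $\M$-row except that it forbids $\P$.

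Second, I would compute the edge constraint of $\rere(\re(\Pi_{\Delta}(a,x)))$ by pairing the surviving right-closed sets and keeping only the maximal pairs that contain at least one choice allowed by $\nodeconst_{\re(\Pi_{\Delta}(a,x))}$; after renaming, this should yield exactly $\edgeconst^+_{\Delta}(a,x)$. Then I would compute the node constraint by taking, for each configuration in $\nodeconst_{\re(\Pi_{\Delta}(a,x))}$, the maximal node configurations over the chosen label sets that are supported by at least one choice; the shifts $x \to x+1$ and $a \to a-x-1$ in the exponents of $\nodeconst^+_{\Delta}(a,x)$ should drop out naturally from the re-application of disjunctions. The hypothesis $x+2 \leq a \leq \Delta$ is used here to guarantee that $a-x-1 \geq 1$, so the configuration $\A^{a-x-1}\X^{\Delta-a+x+1}$ is well-defined and is not merged into another.

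The main obstacle is the bookkeeping in the second step: one must verify that no additional node or edge configuration survives in $\rere(\re(\Pi_{\Delta}(a,x)))$ beyond those of $\Pi^+_{\Delta}(a,x)$ (otherwise $\Pi^+_{\Delta}(a,x)$ could be strictly harder than $\rere(\re(\Pi_{\Delta}(a,x)))$, breaking the $0$-round reduction), and that the four configurations of $\nodeconst^+_{\Delta}(a,x)$ really are maximal once the recipe is applied. Once this calculation matches on the nose (or at worst lets us relax $\rere(\re(\Pi_{\Delta}(a,x)))$ in the sense of Definition~\ref{def:relax} to $\Pi^+_{\Delta}(a,x)$), each node can produce its output for $\Pi^+_{\Delta}(a,x)$ by applying the fixed renaming to its own $\rere(\re(\Pi_{\Delta}(a,x)))$-label on each incident edge, requiring zero additional communication. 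Combined with Theorem~\ref{thm:sebastien}, this yields the $\max\{T-1,0\}$ bound.
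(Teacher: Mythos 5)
Your high-level plan is the same as the paper's: apply Theorem~\ref{thm:sebastien}, give a $0$-round relabeling from a $\rere(\re(\Pi_{\Delta}(a,x)))$-solution to a $\Pi^+_{\Delta}(a,x)$-solution, and identify $\Pi^+$ as a renaming (plus relaxation) of some structure built from the right-closed sets of $\re(\Pi_{\Delta}(a,x))$. However, the proposal stops exactly where the real work begins, and some of the framing would lead you astray if you tried to carry it out.

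First, the paper deliberately does \emph{not} compute $\rere(\re(\Pi_{\Delta}(a,x)))$ explicitly; it explicitly says so. It instead defines a target problem $\Pi_{\operatorname{rel}}$ (which is $\Pi^+_{\Delta}(a,x)$ up to renaming) and proves the one-sided claim that every node configuration of $\rere(\re(\Pi_{\Delta}(a,x)))$ can be relaxed (in the sense of Definition~\ref{def:relax}) to some node configuration of $\Pi_{\operatorname{rel}}$. That claim is the technical heart of the lemma, and you do not prove it. The argument is a nontrivial contradiction: assume $\Y_1\s\dots\s\Y_\Delta$ is a non-relaxable node configuration of $\rere(\re(\Pi))$, use right-closedness (Observation~\ref{obs:rcs}) plus failure to relax into each of the four target configurations to deduce counting constraints (at least $x+2$ of the $\Y_i$ contain $\xp$, at least $\Delta-a+x+2$ contain $\zo$, none contain $\zx$, all contain $\za$, at least $x+1$ contain $\xa$), and then pick representatives $y_i \in \Y_i$ to exhibit a configuration that is not in $\nodeconst_{\re(\Pi_{\Delta}(a,x))}$, contradicting the definition of $\rere(\cdot)$. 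You flag this verification as ``the main obstacle'' but leave it entirely to the reader, so the proof is incomplete.

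Second, your stated success criterion --- ``no additional node or edge configuration survives in $\rere(\re(\Pi_{\Delta}(a,x)))$ beyond those of $\Pi^+_{\Delta}(a,x)$'' --- is both stronger than needed and, in fact, false: $\rere(\re(\Pi))$ does have extra configurations (otherwise the paper would not invoke relaxation at all). What is needed is only the relaxation direction, which you do acknowledge parenthetically, but the framing suggests you might hunt for an exact match that does not exist. Relatedly, you would still need the further observation (present in the paper) that relaxing a node configuration by replacing each set by a superset automatically preserves the edge constraint, because of how the edge constraint of $\rere(\cdot)$ is obtained by substituting each label with the disjunction of node-constraint sets containing it; without this remark the $0$-round reduction is not fully justified even after the relaxation claim. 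Finally, the hypothesis $x+2\le a\le\Delta$ is not merely there so $a-x-1\ge 1$; it is also the hypothesis under which Lemma~\ref{lem:computere} (on which everything rests) is stated.
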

\begin{proof}
	Let $a$ and $x$ satisfy  $x+2 \leq a \leq \Delta$.
	We prove the lemma by showing that we can solve $\Pi^+_{\Delta}(a,x)$ in $0$ rounds given an arbitrary solution for $\rere(\re(\Pi_{\Delta}(a,x)))$ (which by definition is precisely one round easier than $\Pi_{\Delta}(a,x)$, unless $\Pi_{\Delta}(a,x)$ is already zero round solvable).
	However, we will not fully compute $\rere(\re(\Pi_{\Delta}(a,x)))$ for this purpose.
	Instead, we will implicitly show that the problem $\Pi_{\operatorname{rel}}$ given by the following node and edge constraints can be solved in $0$ rounds, given a solution for $\rere(\re(\Pi_{\Delta}(a,x)))$. 
	\begin{equation*}
		\begin{aligned}
			\begin{split}
				& \quad\mybox{\zx\zo\za\zp}^{\Delta-x-1} \mybox{\xx\zx\xo\zo\xa\za\xp\zp}^{x+1} \\
				& \quad\mybox{\xp\zp}\s\mybox{\xo\zo\xa\za\xp\zp}^{\Delta-1} \\
				& \quad\mybox{\xa\za\xp\zp}^{a-x-1}\s\mybox{\xx\zx\xo\zo\xa\za\xp\zp}^{\Delta-a+x+1}\\
				& \quad\mybox{\zo\za\xp\zp}^{\Delta-x}\s\mybox{\xx\zx\xo\zo\xa\za\xp\zp}^{x} \\\\\\
			\end{split}
			\qquad
			\begin{split}
				& \quad \mybox{\zx\zo\za\zp}\s[\mybox{\xp\zp}\mybox{\xa\za\xp\zp}\mybox{\zo\za\xp\zp}\mybox{\xo\zo\xa\za\xp\zp}\mybox{\xx\zx\xo\zo\xa\za\xp\zp}] \\
				& \quad \mybox{\xo\zo\xa\za\xp\zp}\s[\mybox{\zx\zo\za\zp}\mybox{\xa\za\xp\zp}\mybox{\zo\za\xp\zp}\mybox{\xo\zo\xa\za\xp\zp}\mybox{\xx\zx\xo\zo\xa\za\xp\zp}] \\
				& \quad \mybox{\xp\zp}\s[\mybox{\zx\zo\za\zp}\mybox{\xx\zx\xo\zo\xa\za\xp\zp}] \\
				& \quad \mybox{\xa\za\xp\zp}\s[\mybox{\zx\zo\za\zp}\mybox{\zo\za\xp\zp}\mybox{\xo\zo\xa\za\xp\zp}\mybox{\xx\zx\xo\zo\xa\za\xp\zp}] \\
				& \quad \mybox{\xx\zx\xo\zo\xa\za\xp\zp}\s[\mybox{\zx\zo\za\zp}\mybox{\xp\zp}\mybox{\xa\za\xp\zp}\mybox{\zo\za\xp\zp}\mybox{\xo\zo\xa\za\xp\zp}\mybox{\xx\zx\xo\zo\xa\za\xp\zp}] \\
				& \quad \mybox{\zo\za\xp\zp}\s[\mybox{\zx\zo\za\zp}\mybox{\xa\za\xp\zp}\mybox{\xo\zo\xa\za\xp\zp}\mybox{\xx\zx\xo\zo\xa\za\xp\zp}]
			\end{split}
		\end{aligned}
	\end{equation*}
	We will start by showing that any node configuration of $\rere(\re(\Pi_{\Delta}(a,x)))$ can be relaxed to some node configuration of $\Pi_{\operatorname{rel}}$.
	For a contradiction, suppose that there is a node configuration $\Y_1\s\Y_2\s\dots\s\Y_\Delta$ of $\rere(\re(\Pi_{\Delta}(a,x)))$ that cannot be relaxed to any node configuration of $\Pi_{\operatorname{rel}}$.
	Note that, by Observation~\ref{obs:rcs}, the $Y_i$ are subsets of the label set $\mybox{\xx\zx\xo\zo\xa\za\xp\zp}$ that are right-closed (w.r.t.\ the diagram given in Figure~\ref{fig:nodia}).
	We now use the fact that $\Y_1\s\Y_2\s\dots\s\Y_\Delta$ cannot be relaxed to any node configuration of $\Pi_{\operatorname{rel}}$ to collect information about the $\Y_i$.
	
	At least $x+2$ of the $Y_i$ must contain the label $\xp$ as otherwise at least $\Delta - x - 1$ of the $Y_i$ would be a subset of the set $\mybox{\zx\zo\za\zp}$ (due to the right-closedness of the $Y_i$), which in turn would imply that $\Y_1\s\Y_2\s\dots\s\Y_\Delta$ can be relaxed to the node configuration $\mybox{\zx\zo\za\zp}^{\Delta-x-1} \mybox{\xx\zx\xo\zo\xa\za\xp\zp}^{x+1}$.
	Using an analogous argument, we see that at least $\Delta - a + x + 2$ of the $Y_i$ must contain the label $\zo$ as otherwise $\Y_1\s\Y_2\s\dots\s\Y_\Delta$ could be relaxed to $\mybox{\xa\za\xp\zp}^{a-x-1}\s\mybox{\xx\zx\xo\zo\xa\za\xp\zp}^{\Delta-a+x+1}$.

	Now if one of the $Y_i$ contained the label $\zx$, it would be possible to select labels $y_1, y_2, \dots, y_\Delta$ from $\Y_1, \Y_2, \dots, \Y_\Delta$, respectively, such that the configuration $\y_1\s\y_2\s\dots\s\y_\Delta$ would contain at least one $\zx$, at least $(x + 2) - 1 = x + 1$ many $\xp$, and at least $(\Delta - a + x + 2) - 1 - (x + 1) = \Delta - a$ many $\zo$.
	However, by Lemma~\ref{lem:computere}, no such configuration is contained in the node constraint of $\re(\Pi_{\Delta}(a,x))$, which implies, by the definition of $\rere(\cdot)$, that $\Y_1\s\Y_2\s\dots\s\Y_\Delta$ is not a node configuration of $\rere(\re(\Pi_{\Delta}(a,x)))$, yielding a contradiction.
	Hence, none of the $Y_i$ contains the label $\zx$, and by the right-closedness of the $Y_i$ it follows that each $Y_i$ is a subset of $\mybox{\xo\zo\xa\za\xp\zp}$.
	This has two consequences.

	First, each of the $Y_i$ contains the label $\za$ as otherwise the right-closedness of the $Y_i$ would imply that $\Y_1\s\Y_2\s\dots\s\Y_\Delta$ could be relaxed to $\mybox{\xp\zp}\s\mybox{\xo\zo\xa\za\xp\zp}^{\Delta-1}$.
	Second, at least $x+1$ of the $Y_i$ contain the label $\xa$ as otherwise $\Y_1\s\Y_2\s\dots\s\Y_\Delta$ could be relaxed to $\mybox{\zo\za\xp\zp}^{\Delta-x}\s\mybox{\xx\zx\xo\zo\xa\za\xp\zp}^{x}$.

	Now we have all the ingredients to obtain the desired contradiction.
	The latter two observations together with the earlier obtained fact that at least $\Delta - a + x + 2$ of the $Y_i$ contain the label $\zo$ imply that we can select labels $y_1, y_2, \dots, y_\Delta$ from $\Y_1, \Y_2, \dots, \Y_\Delta$, respectively, such that the configuration $\y_1\s\y_2\s\dots\s\y_\Delta$ would contain $x + 1$ many $\xa$ and $(\Delta - a + x + 2) - (x + 1) = \Delta - a + 1$ many $\zo$, and the remaining labels would be $\za$.
	However, such a configuration does not exist in the node constraint of of $\re(\Pi_{\Delta}(a,x))$, which again implies that $\Y_1\s\Y_2\s\dots\s\Y_\Delta$ is not a node configuration of $\rere(\re(\Pi_{\Delta}(a,x)))$, yielding a contradiction.
	Thus, any node configuration of $\rere(\re(\Pi_{\Delta}(a,x)))$ can be relaxed to some node configuration of $\Pi_{\operatorname{rel}}$.

	Using the method explained in Section~\ref{sec:re}, we can compute the edge constraint of $\rere(\re(\Pi_{\Delta}(a,x)))$ by simply taking the edge constraint of $\re(\Pi_{\Delta}(a,x))$ and replacing each label $\y$ in each configuration by the disjunction of all label sets that appear in the node constraint of $\rere(\re(\Pi_{\Delta}(a,x)))$ and contain $\y$.
	Due to Definition~\ref{def:relax} and the fact that the edge constraint of $\Pi_{\operatorname{rel}}$ is obtained by precisely the aforementioned method (with the only difference that the label sets from the node configurations of $\Pi_{\operatorname{rel}}$ are used instead of those of $\rere(\re(\Pi_{\Delta}(a,x)))$, and that the same configurations are represented by using different condensed configurations), we see that there is a simple $0$-round algorithm that computes a correct solution for $\Pi_{\operatorname{rel}}$ given a solution for $\rere(\re(\Pi_{\Delta}(a,x)))$: each node simply replaces the node configuration it outputs in the given solution by a relaxation thereof that is contained in the node configuration of $\Pi_{\operatorname{rel}}$ (such that each set in the configuration is replaced by a superset thereof).
	This is possible since, as we proved, any node configuration of $\rere(\re(\Pi_{\Delta}(a,x)))$ can be relaxed to some node configuration of $\Pi_{\operatorname{rel}}$.
	Note that the fact that the $0$-round algorithm replaces sets by supersets, together with the observations about the aforementioned method, ensures that the produced edge configurations are indeed contained in the edge constraint of $\Pi_{\operatorname{rel}}$.
	
	The desired statement that $\Pi^+_{\Delta}(a,x)$ can be solved in $0$ rounds given an arbitrary solution for $\rere(\re(\Pi_{\Delta}(a,x)))$ (and hence the lemma statement) now follows from the observation that the two problems $\Pi^+_{\Delta}(a,x)$ and $\Pi_{\operatorname{rel}}$ are the same up to renaming as shown by the following mapping.
	\begin{align*}
		\mybox{\zx\zo\za\zp} &\mapsto \M \\
		\mybox{\xx\zx\xo\zo\xa\za\xp\zp} &\mapsto \X \\
		\mybox{\xp\zp} &\mapsto \P \\
		\mybox{\xo\zo\xa\za\xp\zp} &\mapsto \O \\
		\mybox{\xa\za\xp\zp} &\mapsto \A \\
		\mybox{\zo\za\xp\zp} &\mapsto \C \\
	\end{align*}

\end{proof}

We now prove that, by exploiting a given $\Delta$-edge coloring, we can relate  $\Pi^+_{\Delta}(a,x)$ with some problem of the family.

\begin{lemma}\label{lem:use-edge-col}
	If we are given a $\Delta$-edge coloring in input, then $\Pi^+_{\Delta}(a,x)$ is at least as hard as $\Pi_{\Delta}(\lfloor(a-2x-1)/2\rfloor,x+1)$, for all $x,a$ satisfying $2x+1 \le a \le \Delta$.
\end{lemma}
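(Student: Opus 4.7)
The plan is to construct an explicit $0$-round algorithm that, given any solution to $\Pi^+_{\Delta}(a,x)$ together with the input $\Delta$-edge coloring, produces a valid solution to $\Pi_{\Delta}(a', x+1)$ where $a' := \lfloor(a-2x-1)/2\rfloor$. The rule is purely local. A node whose $\Pi^+_{\Delta}(a,x)$-output is $\M^{\Delta-x-1}\s\X^{x+1}$ or $\P\s\O^{\Delta-1}$ keeps its labels unchanged; note that $\M^{\Delta-x-1}\s\X^{x+1} = \M^{\Delta-(x+1)}\s\X^{x+1}$ is already the $\M$-configuration of $\Pi_{\Delta}(a', x+1)$. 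An A-node (configuration $\A^{a-x-1}\s\X^{\Delta-a+x+1}$) keeps the label $\A$ on the $a'$ of its $\A$-edges whose edge colors are the smallest, and relabels its other $\A$-edges as $\X$. A C-node (configuration $\C^{\Delta-x}\s\X^{x}$) relabels the $a'$ of its $\C$-edges whose edge colors are the largest as $\A$, and its other $\C$-edges as $\X$. Both conversions output $\A^{a'}\s\X^{\Delta-a'}$, which is the $\A$-configuration of $\Pi_{\Delta}(a', x+1)$, so every node configuration is valid.

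Checking the edge constraint of $\Pi_{\Delta}(a', x+1)$ is routine for every combination of original labels except that we must never create an $\A\s\A$ edge. Only $\A$- and $\C$-labels are ever rewritten, and only into $\A$ or $\X$; since $\X$ is compatible with every label in $\Sigma_{\Delta}(a', x+1)$, no other incompatibility can appear. Moreover, the edge constraint of $\Pi^+_{\Delta}(a,x)$ forbids both $\A\s\A$ and $\C\s\C$ edges in the input solution, so the only candidates for a newly created $\A\s\A$ edge are the original $\C\s\A$ edges on which the A-endpoint $u$ kept $\A$ and the C-endpoint $v$ relabelled to $\A$.

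The main obstacle is therefore to show that no such $\C\s\A$ conflict is ever produced. Let $T_A(u)$ denote the $a'$-th smallest color among the $\A$-edges of $u$ and let $T_C(v)$ denote the $((\Delta-x)-a')$-th smallest color among the $\C$-edges of $v$. Under the rule, the A-side writes $\A$ on an edge of color $c$ iff $c \le T_A(u)$, while the C-side writes $\A$ iff $c > T_C(v)$; so a conflict would force $T_A(u) > T_C(v)$. Since the host graph is $\Delta$-regular and the input is a proper $\Delta$-edge coloring, every color in $\{1,\dotsc,\Delta\}$ occurs exactly once among the incident edges of every node. Counting how many $\A$-colors of $u$ must lie strictly above $T_A(u)$ yields $T_A(u) \le \Delta - (a-x-1) + a' = \Delta - a + x + 1 + a'$; counting how many $\C$-colors of $v$ must lie at most $T_C(v)$ yields $T_C(v) \ge (\Delta-x) - a'$. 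Subtracting gives
\[
T_A(u) - T_C(v) \;\le\; 2a' - (a - 2x - 1) \;\le\; 0,
\]
where the last inequality is exactly $2a' \le a - 2x - 1$, the defining property of $a' = \lfloor(a-2x-1)/2\rfloor$. Hence $T_A(u) \le T_C(v)$, no $\A\s\A$ conflict ever arises, and the conversion produces a valid solution to $\Pi_{\Delta}(a', x+1)$ in $0$ rounds.
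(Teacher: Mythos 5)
Your proposal is correct, and it achieves the conversion by a genuinely different mechanism than the paper's proof. The paper fixes a single \emph{global} color threshold $\lfloor(a-1)/2\rfloor$: $\A$-nodes first relabel all their $\A$'s on low-colored edges (colors in $\{1,\dots,\lfloor(a-1)/2\rfloor\}$) to $\X$ and then trim to exactly $a'$ remaining $\A$'s, while $\C$-nodes write $\A$ only on their low-colored $\C$-edges and trim likewise. Since $\A$-nodes end up writing $\A$ exclusively on colors above the threshold and $\C$-nodes exclusively on colors at or below it, an $\A\s\A$ conflict is impossible simply because the two color ranges are disjoint; the arithmetic (e.g.\ $(a-x-1)-\lfloor(a-1)/2\rfloor = \lceil(a-2x-1)/2\rceil \ge a'$) only ensures each side has enough edges left. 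Your rule instead uses \emph{node-local rank thresholds}: each $\A$-node keeps its $a'$ lowest-colored $\A$-edges and each $\C$-node uses its $a'$ highest-colored $\C$-edges, and non-conflict is established by the counting bound $T_A(u)\le\Delta-(a-x-1)+a'$, $T_C(v)\ge(\Delta-x)-a'$, and $2a'\le a-2x-1$. This pigeonhole argument is tight and correct (you correctly note that a conflict requires the \emph{strict} inequality $T_A(u)>T_C(v)$, while your bounds only guarantee $T_A(u)\le T_C(v)$). The paper's global-threshold variant is slightly cleaner to verify because the disjointness of the two relevant color sets is immediate, whereas your version requires the explicit counting step; on the other hand, your formulation avoids introducing an auxiliary threshold constant and may generalize somewhat more naturally to settings where a convenient shared split point is not available. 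One minor presentational remark: it is worth making explicit that the only labels ever rewritten are $\A$ and $\C$, that rewriting to $\X$ is always safe, and that rewriting to $\A$ is only ever applied to edges whose original other-side label was in $\{\M,\A,\C,\O,\X\}$ (never $\P$) --- all of which you do address, but compactly.
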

\begin{proof}
	We show how to use any given solution for $\Pi^+ = \Pi^+_{\Delta}(a,x)$ to solve, in $0$ rounds, the problem $\Pi = \Pi_{\Delta}(\lfloor(a-2x-1)/2\rfloor,x+1)$.
	Note that $\Pi^+$ and $\Pi$ are defined similarly. In fact, they only differ in two places: 
	\begin{itemize}
		\item In $\Pi$ the label $\C$ never appears, hence in order to solve $\Pi$ given a solution for $\Pi^+$, we need to change the output of all nodes outputting $\C^{\Delta-x} \X^x$.
		\item In $\Pi^+$ the nodes outputting the configuration containing the label $\A$ are required to own $a-x-1$ edges, while in $\Pi$ this number is only $\lfloor(a-2x-1)/2\rfloor$.
	\end{itemize}
	We show how to exploit a given $\Delta$-edge coloring to change the output of nodes outputting the configuration containing the label $\C$ to an output that uses the configuration containing $\A$. Note that if all nodes outputting $\C^{\Delta-x} \X^x$ choose $\lfloor(a-2x-1)/2\rfloor$ arbitrary incident edges that are labeled $\C$ and label them with $\A$ and all the others with $\X$, this would almost work: $\C$ is edge-compatible with $[\M\A\O\X]$, and if we ignore $\A$, all other labels are also compatible with $\A$. The only issue is given by edges labeled $\A \s \C$: if such a $\C$ is converted to an $\A$, we obtain an edge labeled $\A \s \A$, that is not allowed. Here we exploit the given $\Delta$-edge coloring. 
	\begin{itemize}
		\item All nodes labeled $\A^{a-x-1}\s\X^{\Delta-a+1+x}$ start by replacing label $\A$ with label $\X$ on the incident edges labeled with colors in $\{1,\ldots,\lfloor(a-1)/2 \rfloor\}$. Note that the remaining number of $\A$ is at least $a-x-1 - \lfloor (a-1)/2 \rfloor = \lceil (a-2x-1)/2\rceil$. Then, they replace arbitrary other $\A$ with $\X$ in order to make the number of $\A$ exactly $\lfloor (a-2x-1)/2 \rfloor$. 
		
		\item All nodes labeled $\C^{\Delta-x}\s\X^{x}$ consider their incident edges labeled with colors in $\{1,\ldots,\lfloor(a-1)/2 \rfloor\}$ that are currently labeled $\C$, write $\A$ on them and $\X$ on all the others. Note that the number of $\A$ is at least $\lfloor(a-1)/2 \rfloor - x = \lfloor(a-2x-1)/2 \rfloor$. Finally, they replace some of the written $\A$ with $\X$ in order to make the number of $\A$ exactly $\lfloor (a-2x-1)/2 \rfloor$.
	\end{itemize}
	Note that there is no need of coordination between nodes, and hence this procedure requires $0$ rounds. We now prove that the obtained labeling is a solution for $\Pi$. 
	\begin{itemize}
		\item Nodes that were labeled with the configuration containing $\A$, replaced some $\A$ with $\X$ to match the required number of $\A$ for $\Pi$, and since $\X$ is edge-compatible with anything that is compatible with $\A$, this part does not violate any constraint.
		\item It cannot happen that an edge is labeled $\A \s \A$, since nodes that were labeled with the configuration containing $\C$ only write the label $\A$ on edges that are colored with $\{1,\ldots,\lfloor(a-1)/2 \rfloor\}$, and on the same edges the nodes originally labeled with the configuration containing $\A$ replaced the label $\A$ with $\X$. Also, since $\C\s \C$ is not allowed, it cannot happen that two neighboring nodes that are labeled with the configuration containing $\C$ both write $\A$ on their common edge (note that they may be neighbors through edges labeled $\X$).
		\item $\C$ is edge compatible with $[\M\A\O\X]$. In the replacing process, some $\C$ are replaced with $\A$ and some others with $\X$, and since we proved that $\A\s\A$ is never obtained, we may only obtain edges labeled $[\A\X] \s [\M\O\X]$, which is allowed by $\Pi$.
		\item All other configurations present in the constraints of $\Pi^+$ are also allowed by the constraints of $\Pi$.
	\end{itemize}
\end{proof}

By combining Lemma \ref{lem:speedup} and Lemma \ref{lem:use-edge-col} we get the following corollary.
\begin{corollary}\label{cor:onestep}
	If the problem $\Pi_{\Delta}(a,x)$ has time complexity $T$, then  $\Pi_{\Delta}(\lfloor(a-2x-1)/2\rfloor,x+1)$ has complexity at most $\max\{T-1,0\}$, for all $a, x$ satisfying $2x+1 \le a$, $x+2 \leq a \leq \Delta$, given a $\Delta$-edge coloring in input.
\end{corollary}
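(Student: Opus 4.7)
The plan is to obtain this corollary by straightforward composition of the two preceding lemmas, with no new machinery needed. First I would invoke Lemma~\ref{lem:speedup}: under the hypothesis $x+2 \le a \le \Delta$, if $\Pi_{\Delta}(a,x)$ has complexity $T$, then the intermediate problem $\Pi^+_{\Delta}(a,x)$ has complexity at most $\max\{T-1,0\}$. Call an algorithm achieving this bound $\mathcal{A}^+$.

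Next I would invoke Lemma~\ref{lem:use-edge-col}: under the hypothesis $2x+1 \le a \le \Delta$, given a $\Delta$-edge coloring in input, any solution for $\Pi^+_{\Delta}(a,x)$ can be post-processed in $0$ rounds into a solution for $\Pi_{\Delta}(\lfloor(a-2x-1)/2\rfloor, x+1)$. Composing $\mathcal{A}^+$ with this local relabeling step yields an algorithm for $\Pi_{\Delta}(\lfloor(a-2x-1)/2\rfloor, x+1)$ with round complexity $\max\{T-1,0\} + 0 = \max\{T-1,0\}$, which is the statement to be proved.

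The only bookkeeping to check is that the hypothesis list of the corollary supplies what both lemmas need simultaneously: the condition $x+2 \le a \le \Delta$ is precisely what Lemma~\ref{lem:speedup} requires, and $2x+1 \le a \le \Delta$ is precisely what Lemma~\ref{lem:use-edge-col} requires, both of which are listed. I would also remark, for clarity, that the availability of the input $\Delta$-edge coloring in the port numbering model is consistent with the hypotheses of Theorem~\ref{thm:sebastien} as used in the proof of Lemma~\ref{lem:speedup}, so that the composed algorithm is well-defined in the model under consideration. There is no genuine obstacle here; the proof is a one-line chaining argument, and the corollary is essentially a convenient repackaging of the two lemmas into the single ``one step of the lower bound sequence'' form that will be iterated in the next section.
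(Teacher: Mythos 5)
Your proof is correct and is essentially the paper's own argument: the paper states this corollary with the one-line justification ``By combining Lemma~\ref{lem:speedup} and Lemma~\ref{lem:use-edge-col},'' which is exactly the composition you describe, with the two hypothesis sets in the corollary being precisely the union of what each lemma requires.
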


In the above we have shown a relation between $\Pi_{\Delta}(a,x)$ and $\Pi_{\Delta}(\lfloor(a-2x-1)/2\rfloor,x+1)$. Note that there is also a more straightforward relation between the problems of the family, that is, by increasing $x$ or by decreasing $a$, the problem does not get harder.
\begin{lemma}\label{lem:makeeasier}
	$\Pi_{\Delta}(a,x)$ can be solved in $0$ rounds given a solution of $\Pi_{\Delta}(a',x')$, for all $a \le a'$ and $x \ge x'$. 
\end{lemma}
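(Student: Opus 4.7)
The plan is to give a direct $0$-round local rewriting: each node, knowing only its own output in the given solution of $\Pi_\Delta(a',x')$, relabels some of its incident edges to produce a valid output for $\Pi_\Delta(a,x)$. The key structural observation driving the argument is that the edge constraint $\edgeconst_\Delta$ does not depend on $a$ or $x$ at all; only the node constraint does. Moreover, from the edge constraint one reads off that the label $\X$ is \emph{universal}, i.e.\ $\X \s [\M\P\A\O\X] \in \edgeconst_\Delta$, so replacing any label on one endpoint of an edge by $\X$ preserves edge-validity.

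I would then handle each of the three node configurations of $\Pi_\Delta(a',x')$ separately. A node outputting $\M^{\Delta-x'}\X^{x'}$ has, since $x'\le x$, at least $x-x'$ incident edges currently labeled $\M$; it relabels exactly $x-x'$ of these to $\X$, obtaining $\M^{\Delta-x}\X^{x}$, which is in $\nodeconst_\Delta(a,x)$. A node outputting $\A^{a'}\X^{\Delta-a'}$ has, since $a'\ge a$, at least $a'-a$ incident edges labeled $\A$; it relabels exactly $a'-a$ of these to $\X$, obtaining $\A^{a}\X^{\Delta-a}$, which is in $\nodeconst_\Delta(a,x)$. A node outputting $\P\s\O^{\Delta-1}$ keeps its labeling unchanged, since this configuration already lies in $\nodeconst_\Delta(a,x)$.

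It remains to check edge validity of the rewritten solution. Every edge of the original solution satisfied $\edgeconst_\Delta$; the rewriting only ever changes labels from $\M$ or $\A$ to $\X$ on the endpoint performing the rewriting. Since $\X$ is universal on edges, every edge configuration obtained after the local rewriting is still in $\edgeconst_\Delta$, regardless of what happens on the other endpoint. Thus the resulting labeling satisfies both $\nodeconst_\Delta(a,x)$ and $\edgeconst_\Delta(a,x)$, which is what we want.

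There is no real obstacle here: the rewriting is fully independent at each node (no coordination), so the argument only needs the elementary counting observation that $x\ge x'$ and $a\le a'$ give enough slack to convert surplus $\M$'s and $\A$'s to $\X$'s, together with the already-noted universality of $\X$. The only mild caveat worth stating explicitly is that we do need $a\le a'\le \Delta$ and $0\le x'\le x\le\Delta$ for the configurations to make sense, which is implicit in the assumption that $\Pi_\Delta(a',x')$ and $\Pi_\Delta(a,x)$ are both problems of the family.
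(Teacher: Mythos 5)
Your proof is correct and takes essentially the same approach as the paper: the paper's (one-line) proof also converts surplus $\M$- and $\A$-labels to $\X$ and invokes the universality of $\X$ under the edge constraint. You have merely spelled out the per-configuration counting and the universality observation more explicitly, which is fine but adds no new idea.
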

\begin{proof}
	In order to convert a solution for $\Pi_{\Delta}(a',x')$ into a solution of $\Pi_{\Delta}(a,x)$ nodes have to relabel some edges labeled $\M$ and $\A$ with $\X$. Since $\X$ is compatible with everything, the edge constraint is not violated after performing this process.
\end{proof}

\subsection{Zero Rounds Solvability}
We now prove that some problems of the family are not $0$-rounds solvable, even if we are given a $\Delta$-edge coloring in input. 
\begin{lemma}\label{lem:zeropn}
	The problem $\Pi_{\Delta}(a,x)$ cannot be solved deterministically in $0$ rounds in the port numbering model, for all $x \le \Delta-1$ and $a \ge 1$, even if a $\Delta$-edge coloring is given in input.
\end{lemma}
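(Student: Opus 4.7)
The plan is to argue by contradiction. Suppose a deterministic $0$-round algorithm $f$ solves $\Pi_{\Delta}(a,x)$ in the port-numbering model with a $\Delta$-edge coloring input. Since the adversary is free to permute port numbers at each node, we may assume, without loss of generality, that the label $f$ outputs on a port depends only on the color and edge-port of that port; thus $f$ is captured by a function $\ell\colon\{1,\ldots,\Delta\}\times\{1,2\}\to\Sigma_{\Delta}(a,x)$. Setting $A_c := \ell(c,1)$ and $B_c := \ell(c,2)$, correctness of $f$ translates into two requirements: (i) $(A_c,B_c)\in\edgeconst_{\Delta}(a,x)$ for every color $c$, and (ii) for every edge-port function $d\in\{1,2\}^{\Delta}$, the multiset $\{\ell(c,d_c):c=1,\ldots,\Delta\}$ lies in $\nodeconst_{\Delta}(a,x)=\{C_1,C_2,C_3\}$, where $C_1=\M^{\Delta-x}\X^{x}$, $C_2=\A^{a}\X^{\Delta-a}$, and $C_3=\P\O^{\Delta-1}$. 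Requirement (ii) follows because the adversary can realize any edge-port function at a given node, even when the $\Delta$-edge coloring is fixed in advance.

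The heart of the argument is a case analysis driven by (ii). Taking $d=(1,\ldots,1)$ and $d=(2,\ldots,2)$ forces both $\{A_c\}_c$ and $\{B_c\}_c$ to lie in $\{C_1,C_2,C_3\}$. Flipping a single coordinate $d_c$ replaces $A_c$ by $B_c$ (or the reverse) in the multiset, so the resulting multiset must again lie in $\{C_1,C_2,C_3\}$. Since the three configurations have fixed counts of their characteristic ``strong'' labels ($\Delta-x\geq 1$ copies of $\M$ in $C_1$, $a\geq 1$ copies of $\A$ in $C_2$, and exactly one $\P$ in $C_3$, using the hypotheses $a\geq 1$ and $x\leq\Delta-1$), a single-element substitution only rarely yields another member of $\{C_1,C_2,C_3\}$. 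Going through the nine possibilities for the pair $(\{A_c\}_c,\{B_c\}_c)$ and checking which single swaps remain inside $\nodeconst_{\Delta}(a,x)$, one concludes that (ii) forces $A_c=B_c$ for every color $c$.

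Once $A_c=B_c$ for all $c$, requirement (i) becomes $(A_c,A_c)\in\edgeconst_{\Delta}(a,x)$. Inspection of the edge constraint reveals that the only self-compatible labels are $\O$ and $\X$: the pairs $\M\s\M$, $\A\s\A$, and $\P\s\P$ are all forbidden. Hence $A_c\in\{\O,\X\}$ for every $c$, so the multiset $\{A_c\}_c$ contains no $\M$, $\A$, or $\P$. But under $a\geq 1$ and $x\leq\Delta-1$ each of $C_1$, $C_2$, $C_3$ contains at least one label from $\{\M,\A,\P\}$, so $\{A_c\}_c\notin\nodeconst_{\Delta}(a,x)$, contradicting (ii). The main obstacle will be the exhaustive nine-case verification that single-coordinate flips force $A_c=B_c$; this requires careful bookkeeping of the strong-label counts in $C_1$, $C_2$, and $C_3$, together with a check that any edge-compatible choice of $(A_c,B_c)$ with $A_c\ne B_c$ can be ruled out by looking at the multiset obtained after toggling $d_c$.
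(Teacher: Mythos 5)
Your proof rests on the claim that requirement (ii) forces $A_c=B_c$ for every color $c$, but this claim is false at exactly the boundary of the lemma's hypotheses. Take $x=\Delta-1$ and $a=1$, so that $C_1=\M\s\X^{\Delta-1}$, $C_2=\A\s\X^{\Delta-1}$, $C_3=\P\s\O^{\Delta-1}$. Set $\ell(1,1)=\M$, $\ell(1,2)=\A$, and $\ell(c,d)=\X$ for all $c\geq 2$ and $d\in\{1,2\}$. Then (i) holds since $\M\s\A$ and $\X\s\X$ are both in $\edgeconst_\Delta(a,x)$, and (ii) holds since for any $d$ the produced multiset is either $C_1$ or $C_2$. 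Yet $A_1=\M\neq\A=B_1$, so the ``nine-case'' analysis you defer to cannot actually yield the conclusion you need; under your modelling assumptions, the statement of the lemma would be false.

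The root cause is a modelling assumption the paper does not make: you allow the $0$-round output to depend on the edge-port (the side of each incident edge the node sits on). In the paper's convention a node's $0$-round view consists of its own port numbering (and the given edge colors), but \emph{not} the edge orientations; the paper states this explicitly inside the proof. With that convention the argument collapses to a much simpler one: in the hard instance where port number equals edge color, every node has an identical $0$-round view, hence must emit the same port-to-label map, hence both endpoints of any edge receive the same label; since each of $C_1,C_2,C_3$ contains a label among $\{\M,\A,\P\}$ that is not self-compatible in $\edgeconst_\Delta(a,x)$ (using $a\geq 1$, $x\leq\Delta-1$), the edge constraint is violated. To repair your write-up you should drop the edge-port dependence and give this direct argument; alternatively, if you insist on the stronger model, the lemma simply does not hold in full generality, as the counterexample above shows.
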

\begin{proof}
	Consider a family of graphs where ports are numbered such that edges of color $i$ have port number $i$ assigned for both the endpoints, for all $i$. Note that the $0$-rounds view of an algorithm for the port numbering model running in this family of graphs is the same for all nodes (while nodes can see the port numbers connecting them to their incident edges, they do not even see the port numbering of the edges, that is, their orientation). This means that a deterministic $0$-rounds algorithm must output the same configuration for all nodes. Such an algorithm is allowed to decide the mapping between the $\Delta$ labels of the configuration and the $\Delta$ ports, but due to the nature of our graph family, each edge obtains the same label on both endpoints (since each edge has the same ports assigned on both sides). We now consider all possible configurations allowed by the node constraint of $\Pi_{\Delta}(a,x)$ and show that if an algorithm tries to use any of them, then it must fail. For this purpose, we show that for all allowed configurations there is at least one label that is not edge-compatible with itself. For the configuration $\M^{\Delta-x} \X^x$ we pick label $\M$, for the configuration  $\A^{a} \s \X^{\Delta - a}$ we pick label $\A$, and for the configuration $\P \s \O^{\Delta - 1}$ we pick label $\P$. Note that $\M \s \M$, $\A \s \A$, and $\P \s \P$, are all configurations not allowed by the edge constraint.
\end{proof}

\subsection{Lower Bound for the Port Numbering Model}
We now put things together and obtain an $\Omega(\log \Delta)$ deterministic lower bound for $k$-outdegree dominating sets in the port numbering model, for all $k$ that are not too large compared to $\Delta$.

\begin{lemma}\label{lem:sequence}
	Let $t = \epsilon \log \Delta$ and $x \le \Delta^{\epsilon}$, for some constant $\epsilon > 0$. For any large enough $\Delta$, there exists a sequence of problems $\Pi_0 \rightarrow \Pi_1 \rightarrow \ldots \rightarrow \Pi_t$ such that, given a $\Delta$-edge coloring,
	\begin{enumerate}
		\item $\Pi_0 = \Pi_{\Delta}(\Delta,x)$, 
		\item $\Pi_{i+1}$ can be solved in $0$ rounds given a solution to $\rere(\re(\Pi_i))$, for all $0 \le i \le t-1$,
		\item $\Pi_{t}$ cannot be solved in $0$ rounds in the deterministic port numbering model,
		\item all problems in the sequence use at most $5$ labels.
	\end{enumerate}
\end{lemma}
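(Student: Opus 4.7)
I would build the sequence by iterating the one-step reduction supplied by Corollary~\ref{cor:onestep}. Concretely, set $\Pi_i := \Pi_\Delta(a_i, x_i)$ with $(a_0, x_0) := (\Delta, x)$ and, inductively, $a_{i+1} := \lfloor (a_i - 2x_i - 1)/2 \rfloor$ and $x_{i+1} := x_i + 1$. Item~1 is then immediate by construction, and item~4 is automatic because every problem in the family $\Pi_\Delta(\cdot,\cdot)$ uses the same $5$-label alphabet $\{\M,\P,\O,\A,\X\}$. Item~2 is exactly the content of Corollary~\ref{cor:onestep} (itself a combination of Lemmas~\ref{lem:speedup} and~\ref{lem:use-edge-col}) applied at step $i$, and item~3 is Lemma~\ref{lem:zeropn} applied to $\Pi_t$. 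So the entire claim reduces to verifying that the side conditions of those results hold throughout the recursion: $2x_i+1 \le a_i$ and $x_i+2 \le a_i \le \Delta$ for every $i \le t-1$, and $a_t \ge 1$, $x_t \le \Delta-1$.

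\textbf{Parameter tracking.} The $x$-side is trivial: $x_i = x + i \le \Delta^\epsilon + \epsilon \log \Delta$, which is at most $\Delta - 1$ for large $\Delta$. For the $a$-side, the recursion implies $a_{i+1} \ge a_i/2 - (x_i + 1)$, and unrolling this gives
\[
    a_i \;\ge\; \frac{\Delta}{2^i} \;-\; \sum_{j=0}^{i-1} \frac{x_j+1}{2^{i-1-j}} \;\ge\; \frac{\Delta}{2^i} \;-\; 2(x+i+1),
\]
using that the geometric weights sum to less than $2$ and that $x_j \le x+i$. Plugging in $t = \epsilon \log \Delta$ and $x \le \Delta^\epsilon$ yields $a_i \ge \Delta^{1-\epsilon} - O(\Delta^\epsilon)$ for every $i \le t$.

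\textbf{Closing the argument and main obstacle.} Fixing any $\epsilon$ strictly below $1/2$ (for instance $\epsilon = 1/3$), the term $\Delta^{1-\epsilon}$ dominates $4\Delta^\epsilon$ once $\Delta$ is large enough, and a short calculation then shows $a_i \ge 2x_i + 1$ for all $i \le t-1$ as well as $a_t \ge 1$. This verifies all hypotheses of Corollary~\ref{cor:onestep} and Lemma~\ref{lem:zeropn}, closing the proof. There is no deep obstacle here: since the heavy lifting already lives in Corollary~\ref{cor:onestep}, the only care point is to calibrate $\epsilon$ so that the factor-$2$ decay of $a_i$ stays ahead of the additive growth of $x_i$ for $\Omega(\log \Delta)$ iterations; the tightest constraint is $2x_i + 1 \le a_i$ evaluated at $i = t-1$, which is precisely what forces $\epsilon < 1/2$.
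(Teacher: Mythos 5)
Your proof is correct and follows the paper's approach: iterate Corollary~\ref{cor:onestep} to build the sequence, track the parameters $(a_i, x_i)$, and invoke Lemma~\ref{lem:zeropn} at the end. The only cosmetic difference is that you carry the exact recursion $a_{i+1} = \lfloor (a_i - 2x_i - 1)/2 \rfloor$ and unroll it directly, whereas the paper normalizes to the closed form $a_i = \lfloor \Delta/2^{3i} \rfloor$ by additionally invoking Lemma~\ref{lem:makeeasier}; both routes verify the same side conditions and yield the same $\Omega(\log \Delta)$-length sequence.
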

\begin{proof}
	For all $i \ge 0$, we define $\Pi_{i} = \Pi_{\Delta}(\lfloor \Delta /2^{3i} \rfloor,x+i)$. Note that, point 4 is trivially true, and also, for $i=0$, this definition matches the definition given in point 1. We now prove point 2, that is that $\Pi_{i}$ is at least one round harder than $\Pi_{i+1}$. 
	
	Let $\Pi_i=\Pi_\Delta(\bar{a},\bar{x})$ be an arbitrary problem of the sequence, where $\bar{a}= \lfloor \Delta /2^{3i} \rfloor$ and $\bar{x}=x+i$. Notice that $\bar{x} < \bar{a}/8$, since for $x \le \Delta^\epsilon$, $i \le \epsilon \log \Delta$, and $\epsilon$ small enough, $\bar{x}=x + i < \lfloor \Delta /2^{3i} \rfloor/8=\bar{a}/8$. By applying Corollary \ref{cor:onestep} we get that $\Pi_i$ is at least one round harder than $\Pi'=\Pi_{\Delta}(\lfloor(\bar{a}-2\bar{x}-1)/2\rfloor,\bar{x}+1)$. Assume $\bar{a}\ge 4$, which holds for all the problems in the sequence if $\Delta$ is large enough. By applying Lemma \ref{lem:makeeasier}, since $\bar{x} < \bar{a}/8$, we get that $\Pi'$ is at least as hard as $\Pi_{\Delta}(a',\bar{x}+1)$, where $a' = \lfloor \bar{a}/4 \rfloor = \lfloor \frac{\lfloor \Delta /2^{3i} \rfloor}{4} \rfloor \ge \lfloor \frac{ \Delta /2^{3i} -1}{4} \rfloor $, that is at least $\lfloor \frac{ \Delta /2^{3i}}{8} \rfloor  = \lfloor  \Delta /2^{3(i+1)} \rfloor $. Hence, $\Pi'$ is at least as hard as $\Pi_\Delta(\lfloor  \Delta /2^{3(i+1)} \rfloor, \bar{x} + 1)=\Pi_{i+1}$. This implies that $\Pi_i$ is at least one round harder than $\Pi_{i+1}$, proving point 2.
	
	For all $x \le \Delta^\epsilon$, $\Pi_t$ is not easier than $\Pi_{\Delta}(\lfloor \Delta /\Delta^{3\epsilon} \rfloor,\Delta^{\epsilon}+\epsilon \log \Delta)$, that for small enough $\epsilon$ and large enough $\Delta$, by Lemma \ref{lem:zeropn}, is not $0$ rounds solvable, proving point 3.
\end{proof}

\section{Lower bound for the LOCAL model}\label{sec:local}
By using standard techniques, we can lift the $\Omega(\log \Delta)$ lower bound obtained for the deterministic port numbering model, to a lower bound for the \LOCAL model. We note that the techniques used in previous works for achieving this goal do not actually depend on the specific problem $\Pi$ for which we want to lift the bound, but just on some properties of $\Pi$. Hence, from previous works, the following is known.
\begin{theorem}[\cite{balliurules,trulytight,Balliu2019}]\label{thm:known}
	Let $\Pi_0 \rightarrow \Pi_1 \rightarrow \ldots \rightarrow \Pi_t$ be a sequence of problems such that $\Pi_{i+1}$ can be solved in $0$ rounds given a solution for $\rere(\re(\Pi_i))$, the number of labels of each problem $\Pi_i$ is upper bounded by $O(\Delta^2)$, and for all $t' < t$, $\Pi_{t'}$ is not $0$-round solvable in the randomized port numbering model with failure probability smaller than $1/\Delta^8$ even if ports are assigned such that for any edge $\{u,v\}$ it holds that if $u$ is connected to $v$ through port $i$, then also $v$ is connected to $u$ through port $i$. Then, $\Pi_0$ requires $\Omega(\min\{t, \log_\Delta n\})$ in the deterministic \LOCAL model and $\Omega(\min\{t, \log_\Delta \log n\})$ in the randomized \LOCAL model.
\end{theorem}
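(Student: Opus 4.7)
The plan is to iterate Theorem~\ref{thm:sebastien} along the chain $\Pi_0,\Pi_1,\dots,\Pi_t$ and then invoke a standard reduction from the \LOCAL model to the port numbering model, handling the deterministic and randomized regimes separately. First I would establish the lower bound in the randomized port numbering model (with the symmetric port assignment of the hypothesis). Suppose for contradiction that $\Pi_0$ admits a $T$-round port-numbering algorithm with failure probability at most $1/n$ on $\Delta$-regular graphs of girth at least $2T+2$. By Theorem~\ref{thm:sebastien} (in its randomized form, as used in~\cite{Brandt2019,Balliu2019}), this yields a $(T-1)$-round algorithm for $\rere(\re(\Pi_0))$, and then by hypothesis~(1) a $(T-1)$-round algorithm for $\Pi_1$, with only a polynomial-in-the-label-count blow-up in failure probability. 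Iterating $T$ times gives a $0$-round algorithm for $\Pi_T$ whose failure probability is at most $(1/n)\cdot \poly(\Delta)^T$ since each problem has $O(\Delta^2)$ labels. For $T=O(\log_\Delta\log n)$ this stays below the $1/\Delta^8$ threshold in hypothesis~(3), contradicting the assumption that $\Pi_{T}$ is not $0$-round solvable in this sense. The deterministic case is the special case where each intermediate failure probability is $0$, and only the girth constraint $2T+2\le \mathrm{girth}$ is active, forcing $T=O(\log_\Delta n)$ on the high-girth $\Delta$-regular $n$-node graphs whose existence is guaranteed by standard constructions.

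Second, I would reduce a \LOCAL algorithm to a port-numbering algorithm in the usual way. On a $\Delta$-regular graph of girth at least $2T+2$, the radius-$T$ view of any node is a $\Delta$-regular tree decorated with a port numbering and either unique IDs or random bits. For the deterministic \LOCAL model I would assign IDs uniformly at random from a polynomial range, convert the \LOCAL algorithm into a randomized port-numbering algorithm, and then fix a good ID assignment via an averaging argument to obtain a deterministic port-numbering algorithm that still succeeds on the entire graph; combined with the port-numbering lower bound this yields the $\Omega(\min\{t,\log_\Delta n\})$ deterministic bound. For the randomized \LOCAL model the same reduction goes through almost verbatim, with the random IDs and the algorithm's internal randomness both being absorbed into the port-numbering algorithm's random bits, so that the $1/n$ \LOCAL failure probability transfers to the port-numbering side; combined with the first step this gives the $\Omega(\min\{t,\log_\Delta \log n\})$ randomized bound.

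The step I expect to be the main obstacle is the bookkeeping of failure probability across the $T$ iterations and across the \LOCAL-to-port-numbering conversion: one must ensure that each round-elimination step loses only a $\poly(\Delta)$ factor (which requires the $O(\Delta^2)$ label bound from hypothesis~(2)), that the symmetric port assignment demanded by hypothesis~(3) is compatible with the random port and ID assignment used in the reduction, and that high-girth $\Delta$-regular $n$-node graphs realizing the bound $\mathrm{girth}=\Omega(\log_\Delta n)$ actually exist. All three pieces have been carried out in essentially the needed form in~\cite{balliurules,trulytight,Balliu2019}, so the remaining task for us is to verify that our family $\{\Pi_i\}$ from Section~\ref{sec:pn} satisfies the hypotheses of the theorem; the label-count bound is immediate from Lemma~\ref{lem:sequence}, and the $0$-round inapproximability with failure probability $<1/\Delta^8$ follows by a straightforward strengthening of Lemma~\ref{lem:zeropn} using the same symmetric-port construction used there, since the obstruction exhibited ($\M\s\M$, $\A\s\A$, $\P\s\P$) fails on a constant fraction of edges regardless of any randomness.
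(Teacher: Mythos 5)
This theorem is stated in the paper as an import from~\cite{balliurules,trulytight,Balliu2019} and is not proven in the paper itself, so there is no ``paper proof'' to compare against directly; I will therefore evaluate your reconstruction against what those cited works actually do.

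The central flaw in your proposal is the failure-probability accounting in the randomized case. You claim that one round-elimination step degrades the failure probability only by a multiplicative $\poly(\Delta)$ factor, so that after $T$ steps it is at most $(1/n)\cdot\poly(\Delta)^T$. This is not what the randomized version of Theorem~\ref{thm:sebastien} gives. The randomized speed-up lemma in \cite{Brandt2019,Balliu2019} produces a $(T-1)$-round algorithm for $\rere(\re(\Pi))$ with \emph{local} failure probability of the form roughly $S\cdot p^{1/(\Delta+1)}$, where $p$ is the starting local failure probability and $S$ is a quantity bounded in terms of the number of configurations (this is where the $O(\Delta^2)$ label bound enters, via $S \le \ell^{O(\Delta)}$). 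The $(\Delta+1)$-th root, not the polynomial prefactor, is the dominant effect: after $T$ iterations one gets $p_T \approx S^{O(1)}\cdot p_0^{1/(\Delta+1)^T}$, and requiring $p_T < 1/\Delta^8$ forces $(\Delta+1)^T = O(\log n / \log\Delta)$, i.e.\ $T = O(\log_\Delta\log n)$. Your multiplicative accounting would instead allow $T$ up to roughly $\log_\Delta n$ before the failure probability reaches $1/\Delta^8$, which would promote the randomized lower bound to $\Omega(\min\{t,\log_\Delta n\})$ and thus prove a strictly stronger statement than the theorem you are trying to establish. That discrepancy is itself a signal that the accounting is wrong.

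A secondary, smaller issue is your treatment of the deterministic case as ``the special case where each intermediate failure probability is $0$, and only the girth constraint is active.'' In the deterministic port numbering model the speed-up is indeed exact, but the nontrivial part is the LOCAL-to-PN conversion for deterministic algorithms: unique IDs cannot simply be dropped, and the standard lift (e.g.\ in \cite{Balliu2019,chang16exponential}) uses a Ramsey/Naor--Stockmeyer-style monochromaticity argument (or an equivalent order-invariance reduction) to turn a $T$-round deterministic LOCAL algorithm, for $T = o(\log_\Delta n)$, into a deterministic PN algorithm on a suitable high-girth instance. ``Assign random IDs and fix a good assignment by averaging'' does not by itself produce a PN algorithm, since a good ID assignment is still ID-dependent; the averaging/fixing step you describe is the outline of the \emph{randomized} lift, not the deterministic one. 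The overall scaffolding of your proposal (iterate $\rere\circ\re$, use high-girth graphs, lift PN to LOCAL) is correct and matches the cited works, but the two points above would need to be repaired before the argument is sound.
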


We now prove a lower bound on the failure probability of any algorithm in the randomized port numbering model that tries to solve $\Pi_\Delta(a,x)$ in $0$ rounds given a $\Delta$-edge coloring in input.
\begin{lemma}\label{lem:zerolocal}
	The problem $\Pi_\Delta(a,x)$ is not $0$-round solvable with failure probability less than $1/\Delta^8$ in the randomized port numbering model, for all $x \le \Delta-1$ and $a \ge 1$, even if a $\Delta$-edge coloring is given in input.
\end{lemma}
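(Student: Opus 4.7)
The plan is to adapt the deterministic argument of \lemmaref{lem:zeropn} to the randomized setting using the same adversarial graph family: any graph in which, for every color $i$, both endpoints of an edge of color $i$ see port number $i$ on that edge. In every such graph each node has exactly the same $0$-round view, since the colors of its incident edges coincide with its port numbering. Consequently, a randomized $0$-round algorithm run on such a graph is, at every node, described by one and the same joint distribution over node configurations, which in turn induces, for each port $i$, a marginal distribution $p_\ell^i$ on the label written at port $i$. Moreover, two adjacent endpoints $u,v$ of an edge of color $i$ produce their outputs from independent private random coins, so the labels they write on that edge are independent draws from the marginal $p_\ell^i$.

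The first step is to observe that, because an invalid node configuration immediately violates the node constraint at that node, a failure probability below $1/\Delta^8$ forces the output to be one of the three valid configurations $\M^{\Delta-x}\X^x$, $\A^a\X^{\Delta-a}$, $\P\s\O^{\Delta-1}$ with total probability at least $1-1/\Delta^8$. Writing $\pi_1,\pi_2,\pi_3$ for their individual probabilities, one of them is at least $1/4$ for any $\Delta\ge 2$. Reusing the key observation from the proof of \lemmaref{lem:zeropn}, each of these three configurations contains a designated self-incompatible label ($\M$, $\A$, $\P$ respectively), appearing at least once in its configuration since $x\le\Delta-1$ guarantees $\Delta-x\ge 1$ and since $a\ge 1$.

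The second step is a pigeonhole over ports. Whichever of the three cases holds, summing the corresponding marginal $p_\ell^i$ over all $\Delta$ ports lower-bounds by $\pi_j\cdot(\text{multiplicity of }\ell)\ge 1/4$, so some port $i^\star$ satisfies $p_\ell^{i^\star}\ge 1/(4\Delta)$. I then instantiate the graph to contain an edge $\{u,v\}$ of color $i^\star$, so that port $i^\star$ at $u$ is matched with port $i^\star$ at $v$ on this edge. Since both endpoints output $\ell$ independently with probability at least $1/(4\Delta)$, the probability that both write $\ell$ on the shared edge is at least $1/(16\Delta^2)$; this produces the forbidden configuration $\M\s\M$, $\A\s\A$ or $\P\s\P$ in $\edgeconst_\Delta(a,x)$, and hence lower-bounds the algorithm's total failure probability. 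Since $1/(16\Delta^2)>1/\Delta^8$ for every $\Delta\ge 2$, this yields the required contradiction.

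The only delicate point I expect is a careful setup of the probability space: one has to justify that the marginals $p_\ell^i$ really are identical at $u$ and at $v$ (which uses the engineered symmetry of the graph family, exactly as in \lemmaref{lem:zeropn}) and that the two labels on the shared edge are truly independent (which follows from the independence of private random bits across nodes). Once this is in place, the remainder is the straightforward averaging calculation sketched above.
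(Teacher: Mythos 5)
Your proof is correct and follows essentially the same strategy as the paper's: the same adversarial port assignment tied to the edge coloring, the same observation that this forces all nodes to have identical $0$-round views and hence identical output distributions, the same pigeonhole over the three valid configurations and then over ports, and the same independence-based squaring to produce a self-incompatible edge configuration with probability $\Omega(1/\Delta^2)$. The only substantive difference is a cosmetic improvement on your part: you explicitly account for the probability mass that a randomized algorithm might place on \emph{invalid} node configurations (leading to your $1/4$ bound instead of the paper's $1/3$), whereas the paper implicitly assumes the output is always a valid configuration. This is a slightly cleaner presentation but does not change the approach or the final bound, since $1/(16\Delta^2) > 1/\Delta^8$ still holds for $\Delta \ge 2$.
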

\begin{proof}
	Consider a family of graphs where ports are numbered such that edges of color $i$ have the port number $i$ assigned for both the endpoints, for all $i$. Note that the $0$-rounds view of a randomized algorithm for the port numbering model, running in this family of graphs, is the same for all nodes, except for their random bits. This means that a randomized $0$-rounds algorithm must output the same configuration for all nodes with the same probability.
	Since the node constraint of $\Pi_\Delta(a,x)$ contains only $3$ allowed configurations, then there is at least one configuration that is used by all nodes with probability at least $1/3$. As stated in the proof of Lemma \ref{lem:zeropn}, all configurations contain at least one label $\ell$ that is not edge-compatible with itself. This means that there must be a port $i$ where all nodes with probability at least $1/(3\Delta)$ use label $\ell$. Since $\ell$ is not compatible with itself, a $0$-round algorithm fails with probability at least $1/(3\Delta)^2 \ge 1 / \Delta^8$.
	 
\end{proof}

By combining Lemma \ref{lem:tofamily} (the relation between $k$-outdegree dominating sets and the problems of the family), Lemma \ref{lem:sequence} (the existence of a long sequence of problems), Theorem \ref{thm:known} (the existence of a long sequence of problems implies a lower bound for the \LOCAL model, given a randomized lower bound for the port numbering model), and Lemma \ref{lem:zerolocal} (a randomized lower bound for the port numbering model), we obtain our main result, Theorem \ref{thm:main}, which is stated in Section \ref{sec:results}.
By setting $\Delta \approx 2^{\sqrt{\log n}}$ in the deterministic case and $\Delta \approx 2^{\sqrt{\log \log n}}$ in the randomized case, we directly obtain Corollary \ref{cor:main}, which is also stated in Section \ref{sec:results}.
\

\section{Open Problems}
We proved that solving the maximal independent set problem on trees requires $\Omega(\min\{\log \Delta,f(n)\})$, where $f(n) = \sqrt{\log n}$ for deterministic algorithms, and $f(n) = \sqrt{\log \log n}$ for randomized ones, and that the same holds even for more relaxed variants of the problem. An interesting open question would be to understand the right dependency on $\Delta$, and we conjecture it to be $\Omega(\Delta)$.

One of the reasons why we believe it is important to understand the right dependency on $\Delta$ in trees is related to ruling sets. For $(2,k)$-ruling sets, even if a $\Delta+1$ coloring is already provided in input, the best known upper bound is $O(\Delta^{1/k})$, while the best known lower bound lies in the polylogarithmic in $\Delta$ region. Unfortunately, the $\Omega(\Delta)$ lower bound that is known for MIS is proved on line graphs, where ruling sets are easy to solve \cite{KuhnMW18}, and hence the MIS proof cannot be extended to ruling sets. For this reason, we believe that proving an $\Omega(\Delta)$ lower bound for MIS on trees would help in proving improved lower bounds for ruling sets. 

In this work we managed to improve the lower bound for MIS on trees by exploiting a given $\Delta$-edge coloring. The same technique seems to \emph{fail} for ruling sets. An open question is understanding what other kind of inputs can be effective for performing simplifications in the round elimination framework.

\bibliographystyle{ACM-Reference-Format}
\bibliography{mis-lower-bound}

\end{document}